\def\fullversion{} 

\newcommand{\fullonly}[1]{\ifdefined\fullversion#1\fi}
\newcommand{\confonly}[1]{\ifdefined\fullversion\else#1\fi}

\documentclass[a4paper,USenglish,autoref]{lipics-v2021}
\nolinenumbers 
\fullonly{
  \hideLIPIcs
  \pdfoutput=1
}
\confonly{
  \relatedversiondetails[cite={BHW2026a-full}]{Full Version}{https://arxiv.org/abs/2608.12946}

  \EventEditors{Ioannis Chatzigiannakis, Andrea Vitaletti, Keren Censor-Hillel, and William K. Moses Jr.}
  \EventNoEds{4}
  \EventLongTitle{40th International Symposium on Distributed Computing (DISC 2026)}
  \EventShortTitle{DISC 2026}
  \EventAcronym{DISC}
  \EventYear{2026}
  \EventDate{November 9--13, 2026}
  \EventLocation{Rome, Italy}
  \EventLogo{}
  \SeriesVolume{397}
  \ArticleNo{12}
}

\usepackage{aliascnt}
\usepackage{hyperref}
\usepackage[noresetcount,vlined]{pwalgorithm}
\usepackage{pwmath}
\usepackage[compress]{cite} 

\usepackage[textsize=tiny,disable%
]{todonotes}

\newcommand{\dbtodo}[1]{\todo[color=cyan]{#1}}
\newcommand{\dbtodoi}[1]{\todo[inline, color=cyan]{#1}}

\newcommand{\linerefrange}[2]{lines~\ref{#1}--\ref{#2}}

\newcommand{\defop}[1]{%
    \expandafter\newcommand\csname #1\endcsname[1]{\text{\FuncSty{#1(\ensuremath{##1})}}}%
}
\defop{Read}
\defop{CAS}
\defop{FAI}
\defop{FAD}
\defop{LL}
\defop{SC}
\defop{CL}
\defop{VL}

\defop{RL}
\defop{Store}

\SetKwData{val}{val}
\SetKwData{Tag}{tag}

\title{Efficient Randomized LL/SC that Preserves History Independence}
\author{Dante Bencivenga}{University of Calgary, Canada}{drbenciv@ucalgary.ca}{https://orcid.org/0000-0002-4481-7851}{Natural Sciences and Engineering Research Council of Canada (NSERC) Canada Graduate Scholarship -- Doctoral (CGS-D) 578918-2023/198 and University of Calgary Eyes High Doctoral Recruitment Scholarship}
\author{Homa Habashi}{University of Calgary, Canada}{homa.habashi@ucalgary.ca}{https://orcid.org/0009-0002-8671-5615}{}
\author{Philipp Woelfel}{University of Calgary, Canada}{woelfel@ucalgary.ca}{https://orcid.org/0000-0002-7847-4631}{Natural Sciences and Engineering Research Council of Canada (NSERC) reference number  RGPIN-2025-04233}
\authorrunning{D. Bencivenga, H. Habashi, and P. Woelfel}
\Copyright{Dante Bencivenga, Homa Habashi, and Philipp Woelfel}
\keywords{LL/SC, concurrent algorithms, history independence, randomized algorithms, wait-freedom}
\ccsdesc{Theory of computation~Shared memory algorithms}
\ccsdesc{Theory of computation~Design and analysis of algorithms}
\ccsdesc{Theory of computation~Concurrent algorithms}

\begin{document}

\renewcommand*{\subsectionautorefname}{Section}
\renewcommand*{\subsubsectionautorefname}{Section}

\maketitle

\begin{abstract}
  We study the fundamental problem of implementing $m$ linearizable LL/SC objects with constant expected step complexity in a system of $n$ processes, using bounded base objects commonly available in hardware.
  Assuming that each process may have at most $\tau$ outstanding \LL{} operations, the best known deterministic algorithm requires $\Omega(n^2\tau + m)$ base objects (CAS objects and registers)~\cite{BW2020a}.
  Previously, no comparable randomized algorithm was known.

  By employing randomization and FADD objects in addition to CAS objects, we obtain a space bound of $O(n\tau+m)$ against the weak adaptive adversary.
  For $m=O(1)$ this matches a lower bound for algorithms using CAS objects and registers~\cite{AW2015a}.

  In addition, our object can be employed by quiescently history-independent (QHI) algorithms:
  Whenever no operation on the object is pending and no process has an outstanding \LL{} operation, its internal memory state is uniquely determined by the values of the $m$ LL/SC objects.
  \dbtodo{Changed `assumes' to `uses' $\Theta(m)$ objects, since `assumes' sounds wrong to me.}
  An important application is a recent QHI dynamic hashing algorithm, which uses $\Theta(m)$ hardware LL/SC objects to maintain a hash table of size $m$~\cite{ABFOS2025a}.
  But LL/SC is not available in hardware, and prior to our work no wait-free or efficient lock-free software implementation of LL/SC with similar properties was known.
  Our work demonstrates that one can actually implement the hashing algorithm on available hardware, without an asymptotic increase in step and space complexity, under the reasonable assumption that $m=\Omega(n)$.
\end{abstract}

\fullonly{
  \setcounter{tocdepth}{2}
  \tableofcontents
}

\section{Introduction}
Compare-And-Swap (CAS) is one of the most fundamentally and practically important shared memory primitives.
A CAS object supports the standard \Read{} operation, as well as a \CAS{old,new} operation that writes $new$ to the object if its current value equals $old$.
If this condition holds, the operation succeeds and returns \True; otherwise, it fails, leaves the object unchanged, and returns \False.

Unfortunately, CAS suffers from the ABA problem, meaning there is no guarantee that the value of the object remains unchanged between a \Read{} returning $v$ and a subsequent successful \CAS{v,v'}.
Being able to detect such ABAs is critical for the correctness of many concurrent algorithms.

Load-Linked/Store-Conditional (LL/SC), is the ABA-free counterpart of CAS\@.
An LL/SC object $L$ supports two operations, \LL{} and \SC{}.
The \LL{} operation returns the current value of the object, while \SC{v} attempts to store $v$ into the object.
The store succeeds if and only if no other successful \SC{} has been performed since the calling process's most recent \LL{}.
Otherwise, the operation fails and the object's value remains unchanged.
In any case, the \SC{} operation returns a Boolean value indicating success or failure.

Because LL/SC avoids the ABA problem, it overcomes a major limitation of CAS\@.
Consequently, many shared-memory algorithms rely on LL/SC instead of CAS~\cite{Bar1993a,Her1993,ADT1995,ST1995a,Moi1997a,CJT1998,Moi2001,Jay2002,Jay2003a,Jay2005a,BCHT2016,Bashari2021}.
Unfortunately, unlike CAS, LL/SC is not available on common hardware architectures, motivating numerous software implementations~\cite{IR1994a,AM1995a,AM1995b,Moi1997b,JP2003a,LMS2003a,DHLM2004,Mic2004,JP2005a,JP2005b,JP2005c,AW2016a,BW2020a,JJJ2023a,NW2024a}.

In this paper we present a new efficient randomized implementation of a collection of multiple LL/SC objects that improves on the state of the art in two key ways:
First, it achieves a lower space complexity compared to existing LL/SC implementations built from bounded shared memory primitives that are commonly available in hardware, while also achieving constant expected step complexity.
Second, the implementation can be used in algorithms that assume hardware LL/SC to achieve quiescent history independence~\cite{ABFOS2024a}.
No other known efficient wait-free algorithm has this property.
Our algorithm is \emph{Las Vegas} in the sense that randomization affects only running time and not correctness or history independence.

We will now describe each of these results, which are both derived from the same algorithm.

\subsection{Improved Space Complexity}
A simple way to implement LL/SC using CAS is to augment each CAS object with an unbounded sequence number that increments with every successful \SC{} operation.
Significant efforts have been made to avoid using unbounded sequence numbers.
We will assume throughout that shared memory hardware objects are of bounded size (in particular, unbounded sequence numbers cannot be used), and will refer to the number of such objects used as the space complexity.
The best algorithms that implement a single LL/SC object from commonly available hardware primitives achieve constant step complexity and $O(n)$ space complexity, where $n$ is the number of processes~\cite{BW2020a,JP2003a}.
This matches a time-space tradeoff result, stating that any implementation from $S$ registers and CAS objects with step complexity $T$ satisfies $S \cdot T = \Omega(n)$~\cite{AW2015a}.

Many algorithms require multiple LL/SC objects.
The naive approach of composing $m$ independent LL/SC implementations leads to an undesirably high space complexity of $\Theta(n\cdot m)$.
Better space complexities are often in terms of the maximum number of \emph{outstanding \LL{} operations}, which informally are \LL{} operations that have been executed but have not yet been paired with subsequent \SC{} operations (see \autoref{sec:outstanding_ll} for a formal discussion).
The LL/SC algorithm by Blelloch and Wei~\cite{BW2020a} improves on the naive approach by implementing $m$ LL/SC objects using $O(n^2\tau+m)$ bounded CAS objects and registers, where $\tau$ is an upper bound on the number of outstanding \LL{} operations per process.
As the authors argue, $\tau$ is typically a small constant.
However, if $\tau$ were to asymptotically exceed $m/n$, then the space complexity of their algorithm would become worse than that of the naive composition.
We are not aware of any comparably efficient randomized solutions.

Our randomized algorithm also implements $m$ LL/SC objects and has constant expected step complexity (against the weak adaptive adversary), while improving upon the best known space bounds.
If the total number of outstanding \LL{} operations across all LL/SC objects is known in advance to be at most $k$, then $O(k+m)$ bounded hardware primitives suffice.
We use CAS objects and fetch-and-add (FADD) objects, both of which are widely available in hardware.
For $\tau$ outstanding \LL{} operations per process, this yields a space improvement from the $O(n^2\tau+m)$ base objects of Blelloch and Wei~\cite{BW2020a} to $O(n\tau+m)$.
Since $\tau\leq m$, our algorithm never needs asymptotically more hardware objects than the naive composition of $m$ optimal LL/SC implementations.

On the downside, our algorithm requires storing bounded tags of size $\ceil{\log (\tau n)} + 1$ within CAS objects, in addition to the LL/SC value.
In contrast, the algorithm of Blelloch and Wei can implement multi-width LL/SC using only pointer-width CAS objects.
In practice, when the stored value is a $64$-bit heap pointer, the $16$ high-order bits and $3$ low-order bits are typically zero on modern systems, leaving $19$ bits to store a tag~\cite{Cha2022a}.
Blelloch and Wei~\cite{BW2020a} argue that $\tau \leq 2$ in practice, although a later history-independent hash table~\cite{ABFOS2025a} uses $\tau = 3$; we do not know of any algorithm using LL/SC (QHI or not) which requires $\tau$ larger than $3$.
This implies that tagged pointers can support up to $2^{18}/3$ processes for our implementation, which is far above currently practical thread counts.
In cases where all $64$ bits are required (e.g., double-precision floats), most architectures support double-width CAS for $128$ aligned bits.

Additionally, our implementation supports a \CL{} (``clear link'') operation, enabling a process to ``undo'' a preceding \LL{} (clearing the calling process's outstanding \LL{} operation on a given LL/SC object), which helps to reduce the space complexity.
\dbtodo{Removed the reference to `open link' here due to reviewer C's comment.}
It also supports a \VL{} (``validate link'') operation, allowing a process to verify whether an \SC{} would succeed at that point, without changing the object's state or causing future \SC{} calls to fail.
This is often helpful in algorithms involving multiple LL/SC objects.

\todo{Add something about the techniques being used. They are in some way standard, but there are some new aspects to it. E.g., the combination of bounded FAI and CAS. Can this be explained in one or two sentences?}
\dbtodo{Added three sentences below.}
Our algorithm's basic structure is similar to early bounded-tag LL/SC implementations such as that of Anderson and Moir~\cite{AM1995a} and Moir~\cite{Moi1997b}.
The space complexity bottleneck of many such algorithms is due to needing to track which tags can be safely reused, and choosing one in constant time.
This is difficult to accomplish deterministically, so we instead employ randomization to choose new tags for \SC{}, and use FADD objects to count how many outstanding \LL{} operations have read each tag.
These usage counts prevent ABAs of the \Tag fields by ensuring that the new tags chosen by \SC{} calls have not been read by any outstanding \LL{} operation.
The counts are inspired by reference counting techniques~\cite{Val1995a,Gre1999a,DMMS2002a}, but since we do not use them to free memory, we can circumvent the usual complications involved in safe memory reclamation.

\subsection{History Independence}\label{sec:intro:hi}
In a history-independent data structure, an attacker obtaining a snapshot of the memory cannot learn anything about the past operations applied to the data structure, beyond what is implied by the current abstract data structure state.
This is an important privacy property, which has been extensively studied in sequential computing
~\cite{Mic1997a,NT2001a,HHMPR2002a,BP2006a,ABHVW2004a,HHMPR2005a,BG2007a,NSW2008a,Gol2009a,Gol2010a,BBJKMPSZ2016a}.
While there is a history-independent hash table in a partially concurrent setting~\cite{SB2014a}, only recently has a systematic study of history independence in the concurrent setting been initiated by Attiya, Bender, Farach-Colton, Oshman, and Schiller~\cite{ABFOS2024a}.

Common techniques, such as using sequence numbers or tags to avoid ABAs, timestamps to order operations, or process identifiers to support synchronization, cannot easily be employed in history-independent data structures, as they encode traces of past executions.
LL/SC exemplifies this: To deal with ABAs, an \SC{} operation succeeds if and only if there was no other successful \SC{} since the calling process's preceding \LL{}, and therefore depends on a record of the past.
If LL/SC were implemented in hardware, such metadata could not be retrieved by an attacker through reading shared memory.
However, software implementations must store it in shared memory, potentially exposing it.

We consider \emph{quiescent history independence} (QHI), where the attacker can access memory only when no operations are in progress~\cite{ABFOS2024a}.
The authors of~\cite{ABFOS2024a} also consider \emph{perfect} history independence (PHI), in which the attacker can access memory at any time, but that often makes the attacker too strong:
As they prove, PHI requires impractically large base objects to implement many types of concurrent objects.
A slightly stronger property than QHI is \emph{state} QHI~\cite{ABFOS2024a}, where the attacker can access memory during operations that cannot change the state of the abstract data structure.

A recently proposed randomized hash table~\cite{ABFOS2025a} satisfies state QHI while maintaining strong efficiency: a table of size $m$, storing elements from a universe of size $u$, uses $O(m)$ shared memory objects of size $2\log u+O(1)$ bits each.
It is lock-free with expected amortized step complexity $O(c)$, where $c$ bounds the number of concurrent operations on the same element.

This algorithm critically depends on LL/SC\@.
Since this synchronization primitive is not available in hardware, the hashing algorithm cannot be efficiently implemented on real systems without software support of some form of LL/SC that preserves the history independence of the algorithm.

Observe that even if an LL/SC implementation is itself history-independent, it cannot be trivially used to replace atomic LL/SC\@:
The state of an LL/SC object generally comprises \emph{context} (see \autoref{sec:seq_spec}), describing the set of processes whose next \SC{} should succeed---we say that such processes have \emph{open links} to the object.
Thus, if an LL/SC object has open links in a quiescent state, then an attacker can obtain information about the context.
But algorithms employing atomic LL/SC, such as the hashing algorithm in~\cite{ABFOS2025a}, may assume that that context is not exposed.\footnote{Note that this assumption is only justified if an attacker has access to merely the shared memory, and cannot perform, for example, \SC{} operations. See also the discussion in Section~\ref{sec:prelims:hi}.}
Thus, when using implemented LL/SC, it may be necessary to ensure that no process has an open link in a quiescent state.
To facilitate that, Attiya et al.~\cite{ABFOS2024a} suggested augmenting the LL/SC specification with an operation that allows processes to clear their open links before a quiescent state is entered. 
The authors called this operation \emph{release} or \RL{}, but for consistency with earlier literature (e.g.,~\cite{Moi1997b,BW2020a}) we call it \emph{clear link} or short \CL{}.

The authors showed that there exists a history-independent implementation of LL/SC with a \CL{} operation from a single CAS object.
(The core algorithm is essentially the same as that of Israeli and Rappoport~\cite{IR1994a}.)
In fact, the algorithm satisfies \emph{perfect} history independence. 
This is an important existential result, used by the authors to show that a wait-free history-independent universal construction exists.
However, the algorithm is not efficient and only lock-free (even though, perhaps surprisingly, the resulting universal construction is wait-free):
Each successful \SC{} operation can prevent $n-1$ \LL{} operations from making progress, leading to a worst-case amortized step complexity of $\Omega(n)$. 
Moreover, the underlying CAS object must be able to store $n$ bits of metadata in addition to the value of the LL/SC object.
Hence, the algorithm does not scale well with the number of processes.

In this work, we define \emph{QHI-preserving} LL/SC\@.
That property is weaker than QHI, but sufficiently strong to preserve QHI of an algorithm in essentially the same way as QHI LL/SC does.
We then show that our efficient implementation of multiple LL/SC objects has this property.
It is the first efficient QHI-preserving LL/SC implementation from primitives that are commonly available in hardware.

Our LL/SC implementation can be used in the universal construction of Attiya et al.~\cite{ABFOS2024a} while preserving state QHI\@.
Since our methods are randomized wait-free rather than just lock-free, their construction also becomes simpler (see \autoref{sec:llsc_in_ABFOS}). 
The required bits of metadata per represented LL/SC object is reduced from $n$ to $\log n + O(1)$.
However, our implementation requires additional base objects, in this case $5 n + 1$ FADD objects, each with $\log n + O(1)$ bits.

The hash table of~\cite{ABFOS2025a} retains its asymptotic step and space complexity when substituting hardware LL/SC with our implementation, under the reasonable assumption that $m=\Omega(n)$.
\dbtodo{Rephrased as per reviewer C.}
But, using our LL/SC implementation, the hash table is not state QHI anymore, and the size of its required CAS objects increases by an additive term of $\log n$ to $2\log u+\log n+O(1)$.
To preserve the space bounds, one needs to insert \CL{} operations in appropriate places, so that outstanding \LL{} operations are cleared whenever they are not needed anymore, as we discuss in \autoref{sec:prelims:hi}.
(This technique was introduced in the history-independent universal construction of~\cite{ABFOS2024a}.)



\subsection{Other Related Work}
Almost all prior implementations of $m$ LL/SC objects either do not achieve constant worst-case step complexity~\cite{DHLM2004,IR1994a}, use unbounded sequence numbers or tags~\cite{JP2005a,JP2005c,Mic2004,Moi1997b,JJJ2023a}, or require at least $\Omega(n\cdot m)$ base objects.
The sole exception that avoids all these drawbacks is the algorithm by Blelloch and Wei~\cite{BW2020a}, discussed earlier.
Other work addresses properties such as strong linearizability~\cite{NW2024a}, durability~\cite{JJJ2023a}, or independence of the number of processes in the system~\cite{JP2005c,DHLM2004,JJJ2023a}.
A time-space tradeoff result states that a single 1-bit LL/SC object implemented from $s$ bounded CAS objects and registers has step complexity $\Omega(n/s)$~\cite{AW2015a}.
\dbtodo{Changed $s=O(t)$ to $s=O(1)$.}
Algorithms achieving optimality with respect to this tradeoff are known for $s=\Theta(n)$~\cite{JP2003a} and $s=O(1)$~\cite{AW2015a}.
Our algorithm is also optimal for $s=\Theta(n)$ (and $m=1$).


\section{Preliminaries}\label{sec:prelims}
We use the standard asynchronous shared memory model, where $n$ processes communicate by performing atomic shared memory operations on CAS objects and FADD objects.
In fact, instead of the more general FADD, fetch-and-increment/decrement (FAID) suffices.
This object stores an integer in the range $\set{0, \dots, M-1}$ for some positive integer $M$, and supports the operations \FAI{}, \FAD{}, and \Read{}.
Each operation returns the object's current value, and \FAI{} and \FAD{} respectively increment and decrement it. 
(Our algorithm guarantees that no overflow occurs.)

Our algorithm is randomized, meaning that at any point a process can generate a (private) random number by performing a \emph{coin flip step}.
The adversary determines the order in which processes take steps, by deciding at any point during the execution, the next process to take a step.
We assume the \emph{weak adaptive adversary}, where a process always performs its next shared memory operation immediately after a coin flip step.
I.e., while the weak adaptive adversary knows all past random decisions and adapts to them, it cannot intervene between a process's coin flip and its following shared memory step.
Even though we employ randomization, all correctness properties such as linearizability and QHI are achieved deterministically, i.e., they hold for all executions, regardless of the random decisions made.

\subsection{Sequential Specification}\label{sec:seq_spec}
An \emph{abstract object} (also known as \emph{type}) is a quintuple $(Q,q_0,I,R,\delta)$, where $Q$ is a set of states, $q_0$ is the initial state, $I$ and $R$ are sets of operations and responses, respectively, and $\delta:Q\times I\to Q\times R$ is the transition function.
This quintuple is also called the \emph{sequential specification} of the object, as it defines the set of possible sequential executions: Whenever an object is in state $q$ and operation $i$ is invoked, $\delta(q,i)=(q',r)$ defines the operation's response $r$ and resulting new state $q'$.

To specify the abstract object LL/SC, we follow the definition in~\cite{ABFOS2024a}, which itself is based on~\cite{JJJ2023a}.
(In~\cite{ABFOS2024a}, the specified object is called \emph{context-aware} LL/SC\@.
But context awareness is inherent to the specification, so we simply call it LL/SC.)
Each state is a pair $(v,\mathcal{L})$, where $v$ is the object's \emph{value} and $\mathcal{L}$ is an $n$-bit sequence $(\ell_1,\dots,\ell_n)$, called \emph{context}.
\dbtodo{We are already using $L$ to mean the array of LL/SC objects in the implementation, but I think the two $L$'s are never in scope at the same time as each other.
How about $O$ for open links?
And should we rename `context' to just `open links', with the new definition at the bottom of this paragraph?}
Initially, $\ell_1=\cdots=\ell_n=0$, and the initial value of $v$ can be defined arbitrarily.
\todo{It would be nice if we could say that $p$ has an outstanding link whenever $\ell_p=1$, but given the definition later, this would not be linearizable.
We think that if we did not want to achieve the global space bound, then we could put outstanding links in the sequential spec, but not sure if this can be changed now.
Probably the global space bound is less important, though.
}
\dbtodo{I think it's too much work to change back to a local bound.
However, with my clarification of definitions we can say that $p$ does have an open link whenever $\ell_p=1$.}
We say that $p$ has an \emph{open link} to the LL/SC object whenever $\ell_p = 1$.

\dbtodoi{Should we rephrase the following paragraph to use the phrase `open link' as defined here, or keep it as is?}
\todo{Keep it.}
A \Read{} returns the value $v$ of the object.
An \LL{} (``load-linked'') call by process $p$ sets $\ell_p$ to 1 and returns the value $v$.
An \SC{x} (``store-conditional'') operation by process $p$ succeeds if $\ell_p=1$, in which case it changes the state of the object to $(x,(0,\dots,0))$ and returns $\True$.
If $\ell_p=0$, then the operation fails and returns \False without changing the state of the object.
Operation \CL{} (``clear link'') by process $p$ simply resets $\ell_p$ to 0, and does not return anything.
Finally, operation \VL{} (``verify link'') by process $p$ returns a Boolean value that is true if and only if $\ell_p=1$.

In this paper, we implement a collection of $m$ LL/SC objects.
Thus, the abstract object's state is the cross product of the states of $m$ individual LL/SC objects.
Each operation now has one additional parameter that indicates which of the LL/SC objects in the collection is affected.
For example, operations \LL{i} and \SC{i,x} behave as \LL{} and \SC{x} operations defined above, affecting only the $i$-th LL/SC object in the collection.


\subsection{Outstanding LL() Operations}\label{sec:outstanding_ll}
\todo{Can we change that, too, and add an explanation about the difference between a ``really'' outstanding LL and what is being defined here? Problem is that it's used in an ``atomic'' sense of having pending links in the intro. Or maybe change the intro to ``pending'' link and add that to the sequential spec, then explain here the difference.}
\dbtodo{I've gone ahead with the second option.
I like the idea of `open link' (since you use that term more) to refer to the sequential specification, as the abstract concept of an \LL{} whose next \SC{} or \VL{} is poised to succeed if executed.
I believe we came up with the term `outstanding \LL{} operation' to specifically refer to a process which has performed an \LL{}, and, as far as it locally knows, its next \SC{} or \VL{} might succeed.
(There is however a wrinkle in an \LL{} operation which returns in \autoref{line:ll:final_return} knows that its next \SC{} or \VL{} will fail.
However, that is specific to our implementation, and the term `outstanding \LL{}' should not refer to our specific implementation.)}
A process begins to have an outstanding \LL{} operation on object $i$ (or, equivalently, an outstanding \LL{i}) when it invokes an \LL{i} operation, and it ceases to have an outstanding \LL{i} when it completes one of the following operations: a \SC{i, \cdot}, \CL{i}, or a \VL{i} operation that returns \False.
In particular, a process either has an outstanding $\LL{i}$ or it does not; it cannot, e.g., have more than one outstanding \LL{} on the same object $i$.

Outstanding \LL{} operations are related to but distinct from open links.
Open links are defined by the sequential specification and refer to the abstract state of the LL/SC object.
Outstanding \LL{} operations are defined by the local knowledge of each process, and change at invocation and response points of operations.
If a process has an open link to object $i$ while it is idle, then it also has an outstanding \LL{i}.
However, a process may have an outstanding \LL{$i$} without an open link to object $i$.
In particular, a successful \SC{$i$} operation by process $p$ clears open links to object $i$ from \emph{all} processes, but \emph{only} clears $p$'s outstanding \LL{i}---other processes retain their outstanding \LL{i} until they `discover' that their open link has been cleared.

The implementation uses a known bound $k$ on the number of total outstanding \LL{} operations at any given point in the execution.
Under a bound of $\tau$ outstanding \LL{} operations per process, we would thus have $k = \tau n$.

\subsection{History Independence}\label{sec:prelims:hi}
We consider a linearizable implementation of an abstract object together with a function $h$ (called linearization function), which maps every execution on that object to a linearization.
A linearization~\cite{HW1990a} $h(E)$ of a concurrent execution $E$ is a sequential execution of all operations that have responded in $E$ and some operations that are pending (i.e., invoked but not yet responded) at the end of $E$, such that
\begin{itemize}
  \item For any two calls $a$ and $b$ such that call $a$ responds before call $b$ is invoked in $E$, $a$ appears before $b$ in the linearization $h(E)$,
  \item The invocation arguments and response values (if applicable) of each operation in $E$ matches those in $h(E)$, and
  \item The sequential execution $h(E)$ is consistent with the object's sequential specification.
\end{itemize}
An implementation for which a linearization function exists is called linearizable.
\todo{Now we should probably define linearizability and linearization.}
\dbtodo{Added a definition, but should I also define linearization points in there?}

We assume that at certain points in time during an execution, an attacker gets access to the shared memory.
The types of history independence defined in~\cite{ABFOS2024a} differ mostly in their restrictions on when those attacks can occur.
Formally, we can assume a set $S$ of finite executions, so that attacks can only occur at the end of each execution $E\in S$.
The implementation is history-independent, if for every execution $E\in S$, the state of the shared memory resulting from $E$ is uniquely determined by the state of the abstract object resulting from the linearization $h(E)$.
(I.e., if two executions $E,E'\in S$ yield different shared memory states, then the sequential executions $h(E)$ and $h(E')$ must also yield different abstract object states.)

We focus on \emph{quiescent history independence} (QHI), where an attack can occur only when the object is in a quiescent state (i.e., no operation is pending).
Hence, $S$ is the set of all executions that end in such a quiescent state.

\dbtodoi{I've rephrased the following two paragraphs a bit.
I'm trying to emphasize the abstract states involved, and what they imply regarding HI/QHI.}
Now consider a simple object, say a dictionary with operations \FuncSty{insert()}, \FuncSty{delete()}, and \FuncSty{lookup()}.
Suppose that in the implemented \FuncSty{lookup()} operation, some process $p$ calls \LL{} to read the value of an LL/SC object $L$, and then $p$ never follows up with an \SC{} call.
Consider an execution $E$ that consists of only one such complete \FuncSty{lookup()} operation, and let $h(E)$ be its linearization.
At the end of $E$, the context $(\ell_1,\dots,\ell_n)$ of $L$ has $\ell_p=1$, because $p$ has an open link to $L$.
But if the process hadn't called \FuncSty{lookup()}, $\ell_p$ would be 0.
Thus, the abstract states of $L$ at the end of $E$ and at the end of the empty execution are different.
On the other hand, the linearizations of the two executions are the empty execution and one where only one \FuncSty{lookup()} is being performed.
Thus, both yield the same abstract object state \emph{of the dictionary} (that is, the empty dictionary) even though the abstract state \emph{of the LL/SC object} differs (in the value of $\ell_p$).
Since the two executions end in quiescent configurations with different memory states (namely, the abstract state of $L$), QHI is violated.

The QHI hashing algorithm in~\cite{ABFOS2025a} does not perform such \CL{} operations, and in fact suffers exactly from the above problem:
After an execution in which only a single \FuncSty{lookup()} operation is performed, the calling process has an open link to an LL/SC object, whereas in the empty execution it does not.

\dbtodo{Technically this next sentence is true for the other variants of history independence.}
However, if one defines the state of an LL/SC object as its value \emph{without} context, then a QHI LL/SC implementation would preserve QHI for algorithms using it, including the hashing algorithm in~\cite{ABFOS2025a}.
We will denote such an LL/SC object as \emph{context-oblivious}.
It depends on the attack model whether context-oblivious LL/SC can leak information in such an algorithm:
If an attacker can only observe (read) shared memory, then it can indeed not learn more than the state represented by an abstract context-oblivious LL/SC\@.
If, on the other hand, the attacker can run code on shared memory, then it can use \VL{} or \SC{} operations to expose open links to LL/SC objects, even if they are context-oblivious.
In any case, technically it is problematic to assume context-oblivious LL/SC for history independence, as it is impossible to provide a sequential specification where the object's state is well-defined and the outcome of an operation only depends on this state.
Moreover (and as a result of that) it is impossible to implement context-oblivious LL/SC from other primitives.
Hence, an LL/SC implementation satisfying (even perfect) history independence, such as~\cite{ABFOS2024a}, can in general not simply be used to replace context-oblivious LL/SC and still preserve history independence.

As we have argued, the abstract state of implemented LL/SC objects must include its context.
However, the abstract state of objects implemented using LL/SC typically does \emph{not} depend on LL/SC context, since that is usually an implementation detail.
It therefore follows that a QHI implementation typically has to ensure that there are no open links in quiescent configurations.
For this reason, in their universal construction, Attiya et~al.~\cite{ABFOS2024a} employ \CL{} operations, by letting each process clear its open links before responding from a method call. 

Fortunately, it is easy to modify the hashing algorithm in~\cite{ABFOS2025a} to satisfy this requirement: During each implemented operation, a process keeps track of the \LL{} operations it has called, and, at the end of the method call, clears all links by performing \CL{} operations.
Clearly this does not increase the asymptotic step complexity.
In general, this transformation can be applied to all algorithms where a process calling \LL{} in a method does not rely on that \LL{} operation in a future method call.
This is the case in the hashing algorithm, and in perhaps all but some contrived algorithms.\footnote{One such contrived example is the implementation of a QHI LL/SC object from a single QHI LL/SC object.}

In an LL/SC implementation in which the space complexity depends on the number of outstanding \LL{} operations, such as ours, this general transformation may however increase the space complexity, especially if the algorithm contains loops.
To preserve space complexity, \CL{} operations would need to be added once an outstanding \LL{} operation is no longer needed, to ensure that each process has a minimum number of outstanding \LL{} operations at any given time.
For the hashing algorithm specifically~\cite{ABFOS2025a}, it is possible to show that there are always at most three outstanding \LL{} operations per process which may be later used in \SC{} or \VL{} operations; we discuss this in \autoref{sec:llsc_in_ABFOS}.

\subsection{QHI-Preserving LL/SC}\label{sec:qhi-preserving}
As discussed above, a QHI implementation of an abstract object that uses implemented LL/SC must typically prevent quiescent configurations from having open links.
Thus, when implementing LL/SC for the purpose of using it in QHI algorithms, it makes sense to only require that no information is leaked when there are no outstanding \LL{} operations, and therefore no open links.
\dbtodo{I don't understand this `especially the case' comment, since we have already argued for why we should not allow quiescent configurations from having outstanding \LL{} operations.}
This is especially the case when this relaxed requirement can lead to improved efficiency.
In other words, for the purpose of using implemented LL/SC in QHI algorithms, there is little advantage in requiring an LL/SC object to not leak more information than just context in configurations where context is leaked anyway.

This leads us to the definition of \emph{QHI-preserving} LL/SC, where we allow arbitrary information leakage whenever there are outstanding \LL{} operations.

\todo{I'm afraid we'll have to rename your definition of quiescent state. What about ``unencumbered state''?}
\dbtodo{I've renamed it to `settled' state.}

\dbtodoi{In our implementation, the \Read{} method contains only one step so processes cannot have pending method calls on it.
I'm leaving it as is, though, since I think the intention is a more general treatment of implemented LL/SC objects.}
A collection of LL/SC objects is in a \emph{settled state} if no process has a pending method call on any object, 
\emph{and} no process has an outstanding \LL{} on any object.
Since LL/SC is a building block for other implementations, processes would usually only have an outstanding \LL{} operation when they have a pending method call for an implementation that uses LL/SC\@.
As such, most implementations can ensure that \CL{} is called to clear any outstanding \LL{} operations before an implemented method responds.
When such an implementation is in a quiescent state in the sense of no process having a pending method call, our LL/SC implementation is also in a settled state in the sense of having no outstanding \LL{} operations.

\dbtodo{Changed as per reviewer A}
An LL/SC implementation is \emph{QHI-preserving} if, for every settled state reachable in any execution, its internal shared memory state is uniquely determined by the interpreted values of the LL/SC objects.

Clearly, any QHI LL/SC implementation is also QHI-preserving, but the opposite is not necessarily true.
However, in any QHI algorithm, where in every quiescent state there are no outstanding \LL{}  operations, one can replace QHI LL/SC implementations with QHI-preserving ones, without sacrificing the QHI property of the algorithm.

If a state QHI algorithm uses our LL/SC implementation, and also avoids \LL{}, \SC{}, \CL{}, and \VL{} operations in its non-state-changing operations (i.e., it only uses the \Read{} method on LL/SC objects), then it remains state QHI\@.
This is the case for the universal construction of~\cite{ABFOS2024a}, and so using our implementation of LL/SC in that construction does preserve state QHI (see \autoref{sec:intro:hi}).
Unfortunately, for the hashing algorithm discussed earlier~\cite{ABFOS2025a}, its non-state-changing \FuncSty{lookup()} method relies on \LL{} and \VL{} operations, and so it is not state QHI under our LL/SC implementation.
As argued in \autoref{sec:prelims:hi}, since \LL{} changes the abstract state of the LL/SC object, the hashing algorithm would not be state QHI under \emph{any} LL/SC implemented from other primitives.

Note also that even though our algorithm is randomized, it is deterministically QHI-preserving: Whenever the collection of LL/SC objects is in a settled state, the memory state is in a state that is deterministically and uniquely determined by the object's abstract state.

\section{Implementation}

\subsection{High-Level Description}

We first explain how a simplified version of the algorithm works, without preserving quiescent history independence.

\dbtodo{Possible confusion: I use the word `tag' to refer to both the second field of elements of $L$, e.g., $L[i].tag$, and also to refer to the possible values that these fields can take on.
I think that it's reasonably clear from context, but perhaps I should use slightly different terminology to differentiate these two concepts?}
\dbtodo{Somewhat last minute, but I'm trying to consistently say `\Tag field' when referring to the field that stores tags, with the values themselves still being called `tags'.}
Each LL/SC object is encoded as a CAS object storing a value-tag pair.
The \Tag field stores a positive integer tag with a small, finite range (tag zero is reserved for preserving QHI), which allows \SC{} operations to detect changes since the calling process's previous \LL{} operation, and hence prevent ABAs.
Each possible tag has an associated \emph{usage count}, stored in an array of FAID objects, which count how many outstanding \LL{} operations have read that tag.
Each \SC{} operation attempts to change the object's \Tag field to a randomly chosen new tag with zero usage count, via a \CAS{} operation.
Our implementation guarantees that this \CAS{} attempt succeeds exactly if there was no other successful \CAS{} on the value-tag pair since the corresponding \LL{} operation last read the LL/SC object.

More specifically, an \LL{} operation reads the value-tag pair, and attempts to \emph{protect} the tag by incrementing its associated usage count.
The calling process re-reads the value-tag pair, and if it has changed, then it can linearize before the change, such that its matching \SC{} and/or \VL{} calls can correctly return \False.
In this case, the calling process responds without saving the value-tag pair that it read, as though it did not perform any \LL{} operation at all.
Otherwise, if the value-tag pair stays consistent after the process increments the tag's usage count, then the process locally saves the value-tag pair for use in future operations on that LL/SC object.
(In this case, the operation linearizes at the second read, since it is possible that the value-tag pair had an ABA after the first read and before the process protected the tag.)
The usage count then remains nonzero while the \LL{} operation remains outstanding, which prevents a tag from being reused for that same object.
Thus, the process can later tell whether it still has an open link for its outstanding \LL{} operation by simply checking whether the \Tag field has changed or not from its locally saved value.
Indeed, this is exactly what a \VL{} operation does.

An \SC{} operation by process $p$ first tries to find a new tag that is not protected by any outstanding \LL{} operation, including its own outstanding \LL{}.
It repeatedly reads the usage counts of randomly selected tags until it finds one with a count of zero, which is the only randomized step of the implementation.
Since there are at most $k$ outstanding \LL{} operations in total, and $2k$ possible tags, each usage count has a probability of at least $1/2$ to be zero.
(Recall that we assume the weak adaptive adversary, and therefore that $p$ reads a usage count immediately after randomly choosing a tag, without any other process taking steps in between.)
Then, $p$ attempts a \CAS{} from the previous value-tag pair to the new value and newly chosen tag, and returns whether this \CAS{} succeeds.
Note that the adversary \emph{can} schedule other process steps in between choosing a new tag and this \CAS{} attempt, potentially making other \SC{} operations choose the same tag.
However, if any other \SC{} successfully changes the value-tag pair in between these steps, then $p$'s \CAS{} attempt fails, because $p$ already protected the tag read in its previous \LL{} call, preventing it from being reused during the \SC{}\@.
(We discuss this in more detail in \autoref{sec:low-level:sc} and prove the claim in \autoref{lem:invariant}(\ref{item:invariant:protection}).)

The remaining operations are straightforward:
A \CL{} operation decrements the read tag's usage count, and clears the process's locally saved value-tag pair.
It is also called internally whenever an outstanding \LL{} operation is cleared.
A \Read{} operation returns the value part of the value-tag pair.

\subsubsection{History Independence}
To preserve quiescent history independence, we first introduce a default tag of zero, such that all LL/SC objects' \Tag fields are initialized to zero.
This default tag does not have a usage count, but we instead add an activity counter to each LL/SC object, which is a FAID object that counts the number of outstanding \LL{} operations involving that object.
This activity counter is then used to detect a settled state relative to this object, i.e., when no process has a pending method call or an outstanding \LL{} operation on that object.
The \LL{} method increments this counter, and the \CL{} method (including internal calls of it) decrements it.
The canonical state of shared memory during a settled state has all \Tag fields and all counters set to zero.
While the counters naturally reach zero in a settled state, we need extra steps to ensure that \Tag fields are also reset to zero.

The \CL{} method now takes on the responsibility of resetting an LL/SC object's \Tag field to zero, if it decrements its activity counter to zero.
Since other operations may still start working on the same object concurrently, the calling process takes similar steps to an \LL{} followed by an \SC{} operation, but only to change the \Tag field back to zero.
In particular, the random selection of a tag with zero usage count is replaced with double-checking that the activity counter is still zero, and the \CAS{} only attempts to change the \Tag field to zero from its previously read value.
If the \CL{} call detects any activity on the object, then it stops trying to reset the \Tag field, since it is no longer the `last' operation working on the object.
(I.e., another process would have incremented the activity counter again, relieving the calling process from its responsibility to reset the \Tag field).

\dbtodoi{Is this last paragraph too technical?}
Note that this added step of \CL{} resetting the \Tag field to zero does not change the abstract state of the object, and in particular it must not clear any other process's open link.
Moreover, a \CL{} operation could be poised to reset a \Tag field to zero while other operations start working on the object, even though it checks the activity counter immediately before its attempted reset.
As such, the \LL{}, \SC{} and \VL{} methods have extra logic to avoid misinterpreting a reset \Tag field as a successful \SC{} operation which cleared open links.
In particular, the \LL{} method always accepts tag zero (also because tag zero does not have a usage count to increment), reverting its increment of a usage count in case it first reads a nonzero tag and then tag zero.
If the \SC{} method's first \CAS{} from its known value-tag pair fails, it now also attempts a second \CAS{} from the old value and tag zero to the new value-tag pair, in case the \Tag field was reset during its operation.
The \VL{} method returns \True if the \Tag field has not changed \emph{or} if it has changed to zero.

\subsection{Pseudocode}

\newcommand{\HI}[1]{{\color{red}#1}}
\newcommand{\nonHI}[1]{\textcolor{black}{#1}}

We present the implementation in pages~\pageref{fig:main_methods} and~\pageref{fig:additional_methods}.
The \HI{red} lines are only needed to preserve quiescent history independence, so a simpler algorithm following only the black lines implements LL/SC with the same asymptotic time and space complexity but without preserving history independence.
(We only analyze the full algorithm in \autoref{sec:analysis}.)
For the process-local static dictionary $d$, we use the default value $\bot$ to mean that a given key does not have a value in $d$.
When referring to pairs containing a value and a tag (which elements of $L$ and $d$ store), we use the dot operators $.\val$ and $.\Tag$ to refer to the first and second fields of the pair, respectively.

\begin{figure}\label{fig:main_methods}
    \begin{invisiblebox}{\KwSty{Constants:}}
    \begin{itemize}\fontsize{9.5}{11.5}\selectfont
      \item $U$ is the set of values that LL/SC objects can have
      \item $m$ is the number of LL/SC objects
      \item $n$ is the number of processes
      \item $k$ is the maximum number of outstanding \LL{} operations that processes can have in total
    \end{itemize}
  \end{invisiblebox}

  \begin{invisiblebox}{\KwSty{Shared Data:}}
    \begin{itemize}\fontsize{9.5}{11.5}\selectfont
      \item $L[i]$ for $i \in \set{1, \dots, m}$ is a CAS object storing a pair in $U \times \set{\HI{0,\,}1, \dots, 2k}$
      \HI{\item $C[i]$ for $i \in \set{1, \dots, m}$ is a FAID object storing a number in $\set{0, \dots, n}$}
      \item $R[g]$ for $g \in \set{1, \dots, 2k}$ is a FAID object storing a number in $\set{0, \dots, k}$
    \end{itemize}
  \end{invisiblebox}

  \begin{invisiblebox}{\KwSty{Process-local static data:}}
    \begin{itemize}\fontsize{9.5}{11.5}\selectfont
      \item $d$ is a dictionary storing keys in $\set{1, \dots, m}$ and values in $U \times \set{\HI{0,\,}1, \dots, 2k}$
    \end{itemize}
  \end{invisiblebox}

  \begin{invisiblebox}{\KwSty{Initialization:}}
    \begin{itemize}\fontsize{9.5}{11.5}\selectfont
      \item $L[i] \gets (u_i, \HI{0}^\ast)$, where $u_i$ is an application-dependent initial value from $U$
      \HI{\item $C[i] \gets 0$}
      \item $R[g] \gets 0$
      \item $d$ starts empty for each process (all indexing returns $\bot$)
    \end{itemize}
    \footnotetext{$\ast$ In the non-QHI-preserving version (black lines only), \Tag fields are initialized to any nonzero value, e.g., all ones.}
  \end{invisiblebox}

  \normalsize
  \begin{method}{LL($i \in \set{1, \dots, m}$)}
    \lIf{$d[i] \neq \bot$}{\CL{i}}\label{line:ll:reset_link}
    \HI{$C[i].\FAI{}$}\;\label{line:ll:inc_activity}
    $(val, tag) \gets L[i].\Read{}$\;\label{line:ll:first_read}
    $d[i] \gets (val, tag)$\;\label{line:ll:set_di}
    \HI{\lIf{$tag = 0$}{\Return{$val$}}}\label{line:ll:first_zero_tag}
    $R[tag].\FAI{}$\;\label{line:ll:inc_refcount}
    $(val', tag') \gets L[i].\Read{}$\;\label{line:ll:second_read}
    \lIf{$(val, tag) = (val', tag')$}{\Return{$val'$}}\label{line:ll:matching_tag}
    \HI{\If{$tag' = 0$}{
      $R[tag].\FAD{}$\;\label{line:ll:dec_refcount}
      $d[i] \gets (val', 0)$\;\label{line:ll:reset_tag}
      \Return{$val'$}\label{line:ll:second_zero_tag}
    }}
    \CL{i}\;\label{line:ll:clear}
    \Return{$val$}\label{line:ll:final_return}
  \end{method}

  \begin{method}{SC($i \in \set{1, \dots, m}, u \in U$)}
    \lIf{$d[i] = \bot$}{\Return{\False}}\label{line:sc:early_return}
    \lnl{line:sc:repeat}\Repeat{
      $R[tag'].\Read{} = 0$}
    {$tag' \gets$ \KwSty{uniform and independent random sample from} $\set{1, \dots, 2k}$\label{line:sc:choose_tag}
    }\label{line:sc:check_tag}
    $result \gets L[i]$.\CAS{d[i], (u, tag')}\;\label{line:sc:first_cas}
    \HI{\lIf{$result = \False$}{$result \gets L[i]$.\CAS{(d[i].\val, 0), (u, tag')}}}\label{line:sc:second_cas}
    \CL{i}\;\label{line:sc:clear}
    \Return{$result$}\label{line:sc:return}
  \end{method}
\end{figure}

\begin{figure}\label{fig:additional_methods}
    \begin{method}{CL($i \in \set{1, \dots, m}$)}
    \lIf{$d[i] = \bot$}{\Return}\label{line:cl:early_return}
    \nonHI{\HI{\lIf{$d[i].\Tag \neq 0$}{\nonHI{$R[d[i].\Tag].\FAD{}$}}}}\label{line:cl:dec_refcount}
    $d[i] \gets \bot$\;\label{line:cl:clear_di}
    \HI{\lIf{$C[i].\FAD{} > 1$}{\Return}\label{line:cl:dec_activity}
    $(val, tag) \gets L[i].\Read{}$\;\label{line:cl:first_read}
    \lIf{$tag = 0$}{\Return}\label{line:cl:zero_tag_return}
    $R[tag].\FAI{}$\;\label{line:cl:inc_refcount}
    \If{$L[i].\Read{} = (val, tag)$}{\label{line:cl:second_read}
      \If{$C[i].\Read{} = 0$}{\label{line:cl:confirm_quiescence}
        $L[i]$.\CAS{(val, tag), (val, 0)}\label{line:cl:cas}
      }
    }
    $R[tag].\FAD{}$\;\label{line:cl:final_return}}
  \end{method}

  \begin{method}{Read($i \in \set{1, \dots, m}$)}
    \Return{$L[i].\Read{}.\val$}\label{line:read}
  \end{method}

  \begin{method}{VL($i \in \set{1, \dots, m}$)}
    \lIf{$d[i] = \bot$}{\Return{\False}}\label{line:vl:early_return}
    \lIf{$L[i].\Read{}.\Tag \in \set{d[i].\Tag\HI{, 0}}$}{\Return{\True}}\label{line:vl:read}\label{line:vl:return_true}
    \CL{i}\;\label{line:vl:clear}
    \Return{\False}\label{line:vl:return_false}
  \end{method}
\end{figure}

\subsection{Low-Level Description}\label{sec:low-level}


We store the values of the LL/SC objects in an array $L$ of CAS objects.
In addition, each CAS object stores a non-negative bounded tag that is initialized to zero.
The array $R$ is used to count each tag's usage, so that $R[g]$ counts the number of processes that have an outstanding \LL{} operation that read the tag $g$; its purpose is to prevent ABAs.
Similarly, array $C$ is an activity counter for LL/SC objects, so that $C[i]$ tracks how many processes have an outstanding \LL{i}; its purpose is to assist in preserving history independence.

In any LL/SC implementation, processes need some way to keep track of their outstanding \LL{} operations, in order to be able to know whether their \SC{} and \VL{} calls should succeed.
Here, processes use a local dictionary $d$, which stores, for each object $i$, a pair consisting of the last value and tag read from $L[i]$, or the empty value $\bot$ if the process has no outstanding \LL{i}.
(The \SC{i, \cdot}, \VL{i}, and \CL{i} methods each return immediately when $d[i] = \bot$, since they are only meaningful for a process with an outstanding \LL{i}.)

Similarly to the high-level description, we begin by focusing on the black lines which satisfy the sequential specification without preserving history independence, and postpone discussion of the red lines, involving array $C$, most of the \CL{} method, and tag zero.

\subsubsection{LL() Method}
A calling process $p$ reads the pair stored in $L[i]$ (\autoref{line:ll:first_read}), consisting of its value and tag, and stores this state in its local dictionary $d$ (\autoref{line:ll:set_di}).
Next, $p$ increments the corresponding usage count in $R$ (\autoref{line:ll:inc_refcount}), then reads again to confirm that $L[i]$ has not changed (\autoref{line:ll:second_read}).
If $L[i]$ has not changed upon the second read, then we say that $p$ has successfully \emph{protected} the tag from being reused (see the formal definition in \autoref{def:protect}), and $p$ returns in \autoref{line:ll:matching_tag}, linearizing in \autoref{line:ll:second_read}.
This protection lasts until $p$ next invokes \CL{i}, as \autoref{line:cl:dec_refcount} decrements the usage count again.
As we later discuss in the \SC{} method, the algorithm guarantees that a tag is only reused when no process protects it.

Otherwise, $L[i]$ \emph{has} changed to a different value by the time $p$ reads it in \autoref{line:ll:second_read}.
In this case, we take advantage of the fact that an \SC{i, \cdot} is required to fail if there was an intervening \SC{i, \cdot} in between a process's \LL{i} and \SC{i, \cdot}.
We thus retroactively set $p$'s \LL{i} call's linearization point (which defines the order of the execution's linearization) to the first read in \autoref{line:ll:first_read} to satisfy the sequential specification.
(Although this is a standard trick for LL/SC implementations, this retroactive linearization point prevents this \LL{} implementation from being strongly linearizable~\cite{GHW2011a}.)
The \CL{i} call in \autoref{line:ll:clear} sets $d[i]$ to $\bot$ (\autoref{line:cl:clear_di}), as though $p$ had no outstanding \LL{i}, which forces $p$'s following \SC{i, \cdot} to fail in \autoref{line:sc:early_return}.
(In this case, $p$ still does have an outstanding \LL{i} by definition, because the fact that the next \SC{i, \cdot} will fail is not directly visible by an algorithm using our LL/SC implementation.)

\subsubsection{SC() Method}\label{sec:low-level:sc}
Line~\ref{line:sc:early_return} ensures that the \SC{} returns \False{} in the case of the calling process not having any outstanding \LL{} (lines~\ref{line:cl:early_return} and~\ref{line:vl:early_return} serve a similar purpose).
In lines~\ref{line:sc:repeat}--\ref{line:sc:check_tag}, the active process chooses a random tag in the range of possible tags, until it sees one with a usage count of zero, implying that it is not protected by any process.
Since there are $2k$ tags and at most $k$ outstanding \LL{} operations which can protect tags, each attempt succeeds with probability at least $1/2$.
Recall that $d[i]$ stores the previous value-tag pair read from $L[i]$.
Line~\ref{line:sc:first_cas} then attempts a \CAS{} from the previous value-tag pair stored in $d[i]$ to the new value (the second argument of \SC{}) and the newly chosen tag.
The method linearizes at the point of this \CAS{} attempt, calls \CL{} to clear the calling process's outstanding \LL{}, and returns whether the \CAS{} attempt succeeded.

We now highlight that since \SC{} does not itself create an outstanding \LL{}, it does not need to protect any tag, and so it does not increment any usage count in $R$.
This may appear problematic at first:
Suppose that a process $p$ pauses after seeing $R[g] = 0$ in \autoref{line:sc:check_tag} and before its \CAS{} in \autoref{line:sc:first_cas}, and during this pause another process $q$ sees that $R[g] = 0$ for the same tag $g$.
Then, process $q$ may use tag $g$ for the same or different object, and it may no longer be true that $R[g] = 0$ at the time that $p$ executes its \CAS{} in \autoref{line:sc:first_cas}.
We next explain why this does not lead to any problem.

\dbtodoi{TODO: Work on the below phrasing, since these are key ideas for our implementation.} 
First, our implementation \emph{does} allow different LL/SC objects to share the same tag.
Unlike memory management for pointers, the \emph{only} use of tags is to prevent ABAs for individual LL/SC objects in between an \LL{} and matching \SC{}.
As such, it does not matter if the same tag is used for \emph{different} LL/SC objects, as that would not cause an ABA for a single object.

Second, if process $q$ chooses the same tag $g$ and has a successful \CAS{} on the \emph{same} object as $p$, then it is possible to prove that $p$'s \CAS{} will necessarily fail, preventing an ABA by $p$.
The key \emph{protection invariant} (see \autoref{lem:invariant}(\ref{item:invariant:protection})) is that, as long as some process is protecting a tag for some object and therefore contributing to that tag's usage count, no process can change the object's \Tag field to the protected tag.
For process $p$ to have a successful \CAS{}, changing the \Tag field back to $g$ after $q$'s successful \CAS{}, the \Tag field would have had to change back to the previous tag $g'$ that $p$ was expecting from $d[i].\Tag$.
But $p$ must have already done an \LL{i} before its \SC{i, \cdot}, thus protecting that expected tag $g'$.
This is the only way that the protection invariant could be broken, but it requires that the invariant was \emph{already} broken previously for tag $g'$, and so the invariant holds by induction over the execution.


\subsubsection{Remaining Methods}
\dbtodoi{Note that the way I'm using `outstanding \LL{}' here is a little less formal than the exact definition, but I think it suffices for the intuition.
  (Perhaps an alternative definition of outstanding \LL{} could have been directly tied to a contribution to $C[i]$, but I think that's too implementation-dependent.)}
The \CL{} method decrements the usage account for $d[i].\Tag$, ending the calling process $p$'s protection of that tag, and resets $d[i]$ to clear $p$'s outstanding \LL{i}.
The \Read{} method ignores the \Tag field and simply outputs the value of the object.
The \VL{} method checks if the \Tag field has not changed, and returns the appropriate response, calling \CL{} to clear the outstanding \LL{} operation in case the \Tag field has changed.

\subsubsection{CL(), C, and Tag Zero: Preserving Quiescent History Independence}
We now discuss the role of the red lines for preserving quiescent history independence.
Recall (from page~\pageref{fig:main_methods}) that all \Tag fields are initialized to zero for the QHI-preserving version of the algorithm.
The \CL{i} method now ensures that $L[i].\Tag$ is reset to zero when it clears the last outstanding \LL{i}, so that the shared memory returns to its canonical representation upon entering a settled state.

Recall that array $C$ is a shared activity counter for outstanding \LL{} operations on each object.
As such, an \LL{i} call increments $C[i]$ in \autoref{line:ll:inc_activity}, which announces that a process has created an outstanding \LL{i}.
Line~\ref{line:cl:dec_activity} of a \CL{i} call then decrements $C[i]$; if the count before the decrement was greater than one, then some other process has an outstanding \LL{i}, so the calling process $p$ does not reset $L[i].\Tag$.
Otherwise, $p$ tries to reset $L[i].\Tag$ to zero if it is nonzero, but with several checks to avoid doing so after some other process wakes up and starts working on the object.

Lines~\ref{line:cl:first_read},~\ref{line:cl:inc_refcount}, and~\ref{line:cl:second_read} work the same way as lines~\ref{line:ll:first_read},~\ref{line:ll:inc_refcount}, and~\ref{line:ll:second_read} to try to protect the current tag $g$ stored in $L[i].\Tag$.
If $g$ is zero or if $L[i].\Tag$ changes from $g$, then \CL{i} no longer needs to reset $L[i].\Tag$, so it decrements its usage count if necessary (\autoref{line:cl:final_return}).
Otherwise, after double-checking that no process has created an outstanding \LL{} on object $i$ while $p$ was protecting $g$ (\autoref{line:cl:confirm_quiescence}), $p$ resets the \Tag field via a \CAS{} in \autoref{line:cl:cas}.
This last check ensures that any process which changes $L[i].\Tag$ does so after $p$'s execution of \autoref{line:cl:confirm_quiescence}, while $g$ is protected.
Similarly to \SC{}, if the \Tag field changes after $p$ reads it a second time in \autoref{line:cl:second_read}, then the protection invariant ensures that the \CAS{} in \autoref{line:cl:cas} fails, because $p$ is protecting the tag that it read in \autoref{line:cl:first_read}.

\dbtodo{I removed the term `AB0 transition' term and rephrased this explanation with `strong protection', although in hindsight that does not feel like a very self-descriptive term.}
This process of resetting a \Tag field to zero does \emph{not} change the abstract state of the object, but it \emph{is} possible for another process to call \LL{i} on the object in between lines~\ref{line:cl:confirm_quiescence} and~\ref{line:cl:cas}, creating a newly outstanding \LL{i}.
Thus, $L[i].\Tag$ can still change from an \LL{i} operation's newly protected tag $g$ to zero.
However, if $L[i].\Tag$ changes to some other tag $g' \notin \set{g, 0}$ while $g$ is protected, then it cannot change to zero again, because $C[i] > 0$ while there is an outstanding \LL{i} (from \autoref{line:ll:inc_activity}).
As such, once a process has an outstanding \LL{i} with tag $g$ and later sees $L[i].\Tag = 0$, it knows that $L[i].\Tag$ changed \emph{directly} from $g$ to zero due to a \CL{i} call, without any successful \SC{i, \cdot} linearizing in between.
We refer to this additional guarantee as \emph{strongly protecting} a tag (see \autoref{lem:invariant}(\ref{item:invariant:strong_protection})), and it simplifies the way that other methods treat tag zero.

If an \LL{i} operation reads tag zero from either of its reads in line~\ref{line:ll:first_read} or~\ref{line:ll:second_read}, then it stores tag zero in $d[i].\Tag$, and the operation returns in line~\ref{line:ll:first_zero_tag} or~\ref{line:ll:second_zero_tag}.
Since tag zero can only be set by \CL{i} after confirming that $C[i] = 0$, it does not need a usage count in $R$, so the \LL{} operation simply linearizes as soon as it reads tag zero.
In the \SC{} method, the calling process $p$ attempts two \CAS{} operations, in lines~\ref{line:sc:first_cas} and~\ref{line:sc:second_cas}, because the tag could change from a nonzero value directly to zero.
Since $p$ strongly protects the tag read in its preceding \LL{} call, this is the only possible transition during the \SC{} call which does not change the abstract state of the object.
Thus, the two \CAS{} operations are sufficient, and the \SC{} operation linearizes at its last \CAS{} attempt.
Lastly, the \VL{} method also returns \True in the case of tag zero in \autoref{line:vl:return_true}, which again works due to the calling process strongly protecting the tag read in its preceding \LL{} call.


\fullonly{
\section{\confonly{Partial }Analysis}\label{sec:analysis}

\confonly{
  In this appendix, we provide an abbreviated version of the algorithm's analysis, with simplified statements and fewer proofs.
  Please refer to the full version~\cite{BHW2026a-full} for the complete analysis.
}

\subsection{Complexity}

We begin the analysis by showing, in \autoref{thm:complexity}, that the expected step complexity of the implementation is constant under the weak adaptive adversary.
We prove complexity first \fullonly{because}\confonly{for consistency with the full version, in which} the statements used to prove \autoref{thm:complexity} are reused later to prove history independence (\autoref{thm:history_independence}) and linearizability (\autoref{thm:linearizable}).

\fullonly{
  \begin{lemma}\label{lem:cl_clears_di}
    When a process responds from a \CL{i} or \SC{i, \cdot} call, its value of $d[i]$ is $\bot$.
  \end{lemma}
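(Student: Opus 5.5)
The claim follows by inspecting the code of \CL{} and \SC{}, using only the fact that $d$ is process-local. Since no other process can write to $d$, it suffices to trace the assignments to $d[i]$ made by the calling process $p$ during the call, including any nested calls.

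\textbf{The \CL{i} case.} We split on the two ways the method can respond.
\begin{itemize}
\item If $p$ returns at \autoref{line:cl:early_return}, then $d[i]=\bot$ already, and nothing has changed it.
\item Otherwise $p$ executes \autoref{line:cl:clear_di}, which sets $d[i]\gets\bot$.
\end{itemize}
After \autoref{line:cl:clear_di}, the remaining lines of \CL{} (lines~\ref{line:cl:dec_activity}--\ref{line:cl:final_return}) only access shared objects $C[i]$, $L[i]$ and $R[\cdot]$, plus the local variables $val$ and $tag$. None of them writes to $d$, and none calls another method that could write to $d$. So every return point after \autoref{line:cl:clear_di}, whether at \autoref{line:cl:dec_activity}, at \autoref{line:cl:zero_tag_return}, or at the end, has $d[i]=\bot$.

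\textbf{The \SC{i,\cdot} case.} Again we split on the return point.
\begin{itemize}
\item If $p$ returns at \autoref{line:sc:early_return}, then $d[i]=\bot$ by the test on that line.
\item Otherwise $p$ reaches \autoref{line:sc:return}, which is immediately preceded by the call \CL{i} at \autoref{line:sc:clear}. By the \CL{} case above, when that call responds, $d[i]=\bot$. \autoref{line:sc:return} does not modify $d$.
\end{itemize}

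The only point needing care is that the random loop at \autoref{line:sc:repeat} and the \CAS{} steps might not terminate. The statement concerns only responses, so this is harmless: whenever $p$ responds, it does so through one of the return points above. Since $d$ is local and the argument is a direct trace of the code, I expect no real obstacle.
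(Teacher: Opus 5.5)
Your proof is correct and follows the paper's argument essentially verbatim. For \CL{i} it splits between the early return (where $d[i]=\bot$ by the test) and the assignment $d[i]\gets\bot$, which no later line overwrites; for \SC{i,\cdot} it uses either the early-return test or the concluding internal \CL{i} call.
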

  \begin{proof}
    For \CL{i}, if the calling process $p$ returns in \autoref{line:cl:early_return}, then the lemma is satisfied by the if-condition in that line.
    Otherwise, $p$ sets $d[i]$ to $\bot$ in \autoref{line:cl:clear_di} and does not change it again until its response.

    For \SC{i, \cdot}, if the calling process $p$ returns in \autoref{line:sc:early_return}, then the lemma is satisfied by the if-condition in that line.
    Otherwise, $p$ calls \CL{i} in \autoref{line:sc:clear}, which by the above ensures $d[i] = \bot$ by its response, and then immediately returns.
  \end{proof}
}

\fullonly{
  \begin{lemma}\label{lem:nonbot_outstanding_ll}
    At point $t$, if $d[i] \neq \bot$ for process $p$, then $p$ has an outstanding \LL{i}.
  \end{lemma}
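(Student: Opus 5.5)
We argue by induction over the steps of the execution, showing that the invariant ``if $d[i] \neq \bot$ for $p$, then $p$ has an outstanding \LL{i}'' holds at every point $t$. Initially $d$ is empty for every process, so the invariant holds vacuously. Only two kinds of events can break it. The first is a step of $p$ that writes a non-$\bot$ value to $d[i]$. The second is an event that ends $p$'s outstanding \LL{i}, namely the response of an \SC{i,\cdot}, a \CL{i}, or a \VL{i} returning \False, while $d[i] \neq \bot$. Writes to $d$ by other processes are irrelevant, since $d$ is process-local. The invocation of an \LL{i} only creates an outstanding \LL{i}, so it cannot break the invariant.

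For the first kind of event, we inspect every line that assigns a non-$\bot$ value to $d[i]$: \autoref{line:ll:set_di} and \autoref{line:ll:reset_tag}, both inside \LL{i}. We must make sure no other method writes a non-$\bot$ value to $d[i]$, and \CL{} only writes $\bot$. At such a write, $p$ is executing an \LL{i} call, so $p$ has invoked \LL{i}. Since then it has not completed any \SC{i,\cdot}, \CL{i}, or \VL{i}, because the only such calls it makes are nested. The one nested call is the \CL{i} at \autoref{line:ll:reset_link}, and that call occurs before the write. We should check that the definition counts only top-level \CL{} calls made by the application, so that internal calls such as the one in \autoref{line:ll:clear} do not end the outstanding \LL{}. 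Alternatively, we note that by \autoref{lem:cl_clears_di} the internal call leaves $d[i] = \bot$, so the invariant holds afterwards either way. Hence $p$ has an outstanding \LL{i} from its invocation onward, and the write preserves the invariant.

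For the second kind of event, we use \autoref{lem:cl_clears_di}. When $p$ responds from \SC{i,\cdot} or \CL{i}, we have $d[i] = \bot$, so losing the outstanding \LL{i} at that response cannot violate the invariant. For \VL{i} returning \False, the return happens either at \autoref{line:vl:early_return}, where $d[i] = \bot$ already, or at \autoref{line:vl:return_false}, after the \CL{i} in \autoref{line:vl:clear}, which by \autoref{lem:cl_clears_di} leaves $d[i] = \bot$ and is followed by no further write. This completes the induction.

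The main subtlety is the treatment of internally invoked \CL{i} and \SC{} calls relative to the definition of outstanding \LL{} operations. The point to establish is that nested calls end outstanding operations only in a way consistent with $d[i] = \bot$ afterwards. \autoref{lem:cl_clears_di} handles this uniformly, since every nested call also leaves $d[i] = \bot$.
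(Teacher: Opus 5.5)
Your proof is correct and rests on the same facts as the paper's: only \LL{i} writes a non-$\bot$ value to $d[i]$, and by \autoref{lem:cl_clears_di} and the two \VL{} return points every completed \SC{i,\cdot}, \CL{i}, or \VL{i} returning \False{} leaves $d[i]=\bot$; your step-by-step induction just makes explicit the paper's idle/pending case split. One caution: your fallback that \autoref{lem:cl_clears_di} makes nested calls harmless ``either way'' fails for the nested \CL{i} in \autoref{line:ll:reset_link}, because if that call counted it would end the outstanding \LL{i} just before \autoref{line:ll:set_di} writes a non-$\bot$ value, so the argument needs the reading that only top-level operations count, which is the paper's intended reading (cf.\ its remark that an \LL{} returning in \autoref{line:ll:final_return} is still outstanding).
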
 
  \begin{proof}
    Only \LL{i} can set $d[i]$ to a non-$\bot$ value.
    By \autoref{lem:cl_clears_di}, the \CL{i} and \SC{i, \cdot} methods ensure $d[i] = \bot$ by their respective responses.
    If a \VL{i} call returns \False, then either $d[i] = \bot$ in \autoref{line:vl:early_return}, or it calls \CL{i} in \autoref{line:vl:clear}, so in both cases $d[i] = \bot$ by the point it returns \False.
    Thus, if $d[i] \neq \bot$ while $p$ is idle, then $p$ has called \LL{i} and not subsequently called \SC{i, \cdot}, \CL{i}, or a \VL{i} call which returns \False.
    Hence, $p$ has an outstanding \LL{i}.
    If $d[i] \neq \bot$ while $p$ has a pending method call, then we also have that $p$ has invoked \LL{i} and not yet responded from a subsequent \SC{i, \cdot} or \CL{i} call, or a \VL{i} call which returns \False.
    Thus, also in this case $p$ has an outstanding \LL{i}.
  \end{proof}
}

\begin{definition}[Contribution]\label{def:contribute}
  Given a point $t$, a process $p$, and an element of $C$ or $R$, $p$ \emph{contributes $j$} to that element's value, if $j$ is the number of \FAI{} minus \FAD{} operations that $p$ has executed to that element before $t$.
\end{definition}

\fullonly{We next prove upper limits on $p$'s contribution to $C$ and $R$, given that $k$ is the upper bound on the total number of outstanding \LL{} operations across all processes.}

\begin{lemma}\label{lem:max_contribution}
  \fullonly{
    When a process $p$ invokes or responds from a method call,
    \begin{enumerate}[(a)]
      \item if $d[i] = \bot$, then $p$ contributes zero to $C[i]$, and otherwise $p$ contributes one to $C[i]$, and\label{item:idle_contribution:C}
      \item the total contribution to $R$ by $p$ equals the number of values in $p$'s dictionary $d$ which have nonzero tags.\label{item:idle_contribution:R}
    \end{enumerate}
    While process $p$ has a pending method call,
    \begin{enumerate}[(a)]
      \setcounter{enumi}{2}
      \item process $p$ contributes at most one to each individual element of $C$, and\label{item:contribution:C}
      \item $p$'s total contribution to $R$ is no larger than the maximum number of non-$\bot$ values in $p$'s dictionary $d$ in the interval starting immediately before and ending immediately after the method call.\label{item:p_contribution:R}
    \end{enumerate}
    At all points in the execution,
    \begin{enumerate}[(a)]
      \setcounter{enumi}{4}
      \item the total contribution from all processes to elements of $R$ is at most $k$.\label{item:tot_contribution:R}
    \end{enumerate}
  }
  \confonly{At all points in the execution, the total contribution from all processes to elements of $R$ is at most $k$.}
\end{lemma}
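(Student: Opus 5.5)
We prove (a) and (b) jointly by induction over the sequence of method calls of $p$, since a process's contributions change only through its own steps. Initially $d$ is empty and $p$ has executed no \FAI{} or \FAD{}, so both claims hold. For the inductive step we assume (a) and (b) at the invocation of a call and check, method by method, that the net change in contributions over the call matches the change in $d$.

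First \CL{i}. If $d[i]=\bot$ it returns in \autoref{line:cl:early_return} without touching shared memory. Otherwise $d[i].\Tag$ is decremented in \autoref{line:cl:dec_refcount} when nonzero, $d[i]$ is set to $\bot$, and $C[i]$ is decremented in \autoref{line:cl:dec_activity}. The remaining lines pair $R[tag].\FAI{}$ with $R[tag].\FAD{}$ in \autoref{line:cl:final_return}, so they contribute net zero. This matches the removal of one entry, together with its nonzero tag if it had one. \Read{} does nothing. \VL{i} and \SC{i,\cdot} change $d$ or contributions only through their \CL{i} calls.

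For \LL{i}, the initial \CL{i} brings $p$ to a state with $d[i]=\bot$ and contribution zero to $C[i]$. Then $C[i]$ is incremented and $d[i]$ is set non-$\bot$. From there we trace each return path:
\begin{itemize}
\item return at tag zero, with no change to $R$;
\item return at \autoref{line:ll:matching_tag}, with $R[tag]$ incremented and $d[i].\Tag=tag\neq0$;
\item return at \autoref{line:ll:second_zero_tag}, where the increment is undone and $d[i].\Tag=0$;
\item return at \autoref{line:ll:final_return}, where the internal \CL{i} undoes both the $R$ and $C$ increments and leaves $d[i]=\bot$.
\end{itemize}
In every case (a) and (b) are restored.

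For (c) and (d) we inspect intermediate points of the same case analysis. At most one $C[i]$ increment is outstanding at any time, because \LL{i} first clears via \CL{i} and \CL{} only decrements. Within a call the $R$ contribution exceeds the endpoint value by at most one, during the \FAI{}/\FAD{} window of \LL{} or of \CL{}'s reset code. In \LL{} the extra increment coincides with $d[i]$ having just become non-$\bot$, so it is covered by the maximum number of non-$\bot$ entries after the call or just before the \CL{}. In \CL{}'s reset window, the entry $d[i]$ had been non-$\bot$ immediately before the call (or before the enclosing call), and its tag has already been decremented, so the total is again at most that maximum. I expect this bookkeeping for nested \CL{} calls inside \LL{}, \SC{}, and \VL{} to be the main obstacle. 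My approach is to state (d) relative to the outermost method call and note that nested calls only ever decrease the dictionary size before any extra increment occurs.

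For (e), we sum (b) and (d) over processes. Each non-$\bot$ entry of $d$ corresponds to an outstanding \LL{} by \autoref{lem:nonbot_outstanding_ll}. The maxima in (d) are attained at points just before or after a call where those entries are outstanding \LL{} operations. Since an invoked \LL{} already counts as outstanding from its invocation, the total is bounded by the number of outstanding \LL{} operations, which is at most $k$.
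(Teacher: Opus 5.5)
Your proposal is correct and takes essentially the same route as the paper. You prove (a)--(b) by induction over $p$'s method calls, internal \CL{} calls included, with a line-by-line trace of \LL{} and \CL{}; (c)--(d) come from the intermediate points of that trace; and (e) comes from bounding each process's contribution to $R$ by its number of outstanding \LL{} operations via \autoref{lem:nonbot_outstanding_ll}. One wording fix is needed: in \CL{}'s reset window with $d[i].\Tag = 0$, nothing is decremented in \autoref{line:cl:dec_refcount}, but the zero-tagged entry still counts among the non-$\bot$ entries without contributing to $R$, which leaves the same room for the temporary increment (the paper treats this subcase separately).
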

\fullonly{
  \begin{proof}
    Note that by \autoref{def:contribute}, $p$'s contributions to $R$ and $C$, as well as its dictionary $d$, do not change while $p$ is idle.
    Thus, if~(\ref{item:idle_contribution:C}) to~(\ref{item:p_contribution:R}) hold when $p$ responds from a method call and becomes idle, then they also hold until the point when $p$ next invokes a method.

    From~\autoref{lem:nonbot_outstanding_ll}, $p$'s number of outstanding \LL{} operations is at least as large as the number of non-$\bot$ values in $p$'s dictionary $d$.
    In the case of a pending method call, $p$'s number of outstanding \LL{} operations is at least as large as the maximum number of non-$\bot$ values in $d$ in the interval starting immediately before and ending immediately after the method call.
    Therefore,~(\ref{item:idle_contribution:R}) and~(\ref{item:p_contribution:R}) imply that $p$'s total contribution to $R$ never exceeds its number of outstanding \LL{} operations.
    Together with the global bound of $k$ on outstanding \LL{} operations,~(\ref{item:tot_contribution:R}) follows.

    As such, it suffices to prove~(\ref{item:idle_contribution:C}) and~(\ref{item:idle_contribution:R}) for a process $p$ by induction over its method calls, while proving that~(\ref{item:p_contribution:R}) and~(\ref{item:contribution:C}) hold at all points within each of $p$'s method calls.

    Initially, $p$ contributes zero to elements of $C$ and $R$, and its dictionary $d$ is empty (all values are $\bot$), so~(\ref{item:idle_contribution:C}) to~(\ref{item:p_contribution:R}) hold before $p$'s first method call.

    Suppose that~(\ref{item:idle_contribution:C}) to~(\ref{item:p_contribution:R}) hold for $p$ immediately before the invocation of a given method: we will prove that~(\ref{item:p_contribution:R}) and~(\ref{item:contribution:C}) hold throughout the method call, and that~(\ref{item:idle_contribution:C}) and~(\ref{item:idle_contribution:R}) hold immediately after the call's response.

    If the method call is \Read{i}, then~(\ref{item:idle_contribution:C}) to~(\ref{item:p_contribution:R}) hold because \Read{} does not affect $p$'s contributions or its dictionary $d$.

    We next consider the \CL{i} method.
    If the call returns immediately in \autoref{line:cl:early_return}, then neither $p$'s contributions nor $d$ change, so~(\ref{item:idle_contribution:C}) to~(\ref{item:p_contribution:R}) remain true.

    Otherwise, $d[i] \neq \bot$ at \CL{i}'s invocation, so by~(\ref{item:idle_contribution:C}), $p$'s contribution to $C[i]$ is one at invocation.
    Then, $p$ resets $d[i]$ to $\bot$ in \autoref{line:cl:clear_di} and decrements $C[i]$ in \autoref{line:cl:dec_activity}, making its contribution to $C[i]$ become zero and \textbf{satisfying~(\ref{item:idle_contribution:C})} at all the remaining response points.
    If $d[i].\Tag \neq 0$ at invocation, then $p$ decrements $R[d[i].\Tag]$ in \autoref{line:cl:dec_refcount}, satisfying~(\ref{item:idle_contribution:R}) after \autoref{line:cl:clear_di} and before \autoref{line:cl:inc_refcount}, in particular including $p$'s possible response in either line~\ref{line:cl:dec_activity} or~\ref{line:cl:zero_tag_return}.
    Since this possible decrement to $R[d[i].\Tag]$ occurs only if $p$ sets $d[i]$ to $\bot$ in the call,~(\ref{item:p_contribution:R}) remains true from invocation until at least $p$ executes \autoref{line:cl:inc_refcount}.
    Since $p$ does not increase its contribution to $C$ throughout the method call,~\textbf{(\ref{item:contribution:C}) remains satisfied} throughout the call.

    If $p$ responds in \autoref{line:cl:final_return}, then it increments and later decrements some element of $R$ (in lines~\ref{line:cl:inc_refcount} and~\ref{line:cl:final_return}).
    Since the two counteract each other's effect, and since~(\ref{item:idle_contribution:R}) holds immediately before \autoref{line:cl:inc_refcount},~\textbf{(\ref{item:idle_contribution:R}) holds} immediately after $p$'s response in \autoref{line:cl:final_return}, in addition to the other possible response points as proven above.
    Since~(\ref{item:idle_contribution:R}) implies~(\ref{item:p_contribution:R}), it also follows that~(\ref{item:p_contribution:R}) holds immediately after $p$'s response, and it only remains to show that~(\ref{item:p_contribution:R}) holds while $p$ has a pending next step, in particular after executing \autoref{line:cl:inc_refcount}.
    If $d[i].\Tag \neq 0$ at invocation, then $p$'s contribution to $R$ decrements, increments, then decrements again in lines~\ref{line:cl:dec_refcount},~\ref{line:cl:inc_refcount}, and~\ref{line:cl:final_return}, so~(\ref{item:p_contribution:R}) holds throughout the call.
    Otherwise, $d[i].\Tag = 0$ at invocation, and the number of values in $p$'s dictionary which have nonzero tags at invocation is strictly smaller than the number of values which are non-$\bot$ at invocation.
    There is therefore room for at least one increment of $p$'s contribution to $R$ while the method call is pending, and therefore the temporary increment in \autoref{line:cl:inc_refcount} still satisfies~(\ref{item:p_contribution:R}).
    Thus,~\textbf{(\ref{item:p_contribution:R}) holds} throughout a \CL{i} call.


    Since the \SC{} and \VL{} methods do not change $p$'s contribution to $C$ or $R$ or its dictionary $d$ except via their internal \CL{} calls,~(\ref{item:idle_contribution:C}) to~(\ref{item:p_contribution:R}) hold for \SC{} and \VL{} by the above arguments for \CL{}.

    It remains to show~(\ref{item:idle_contribution:C}) to~(\ref{item:p_contribution:R}) for an \LL{i} call by $p$.
    Since they each remain true after internal \CL{} calls by the above arguments, they each remain true immediately before the \LL{i} call executes \autoref{line:ll:inc_activity}.
    By \autoref{lem:cl_clears_di}, the \CL{i} call in \autoref{line:ll:reset_link} also ensures that $d[i] = \bot$ immediately before \autoref{line:ll:inc_activity}.

    It follows from~(\ref{item:idle_contribution:C}) that $p$'s contribution to $C[i]$ immediately before \autoref{line:ll:inc_activity} is zero, and hence its contribution immediately after the \FAI{} is one.
    In particular, at no point is $p$'s contribution to $C[i]$ greater than one, \textbf{satisfying~(\ref{item:contribution:C})} throughout the call.
    After executing \autoref{line:ll:set_di},~(\ref{item:idle_contribution:C}) becomes true again, and remains true until $p$'s possible invocation of \CL{i} in \autoref{line:ll:clear}, then remains true at its response by the above arguments for \CL{}, and therefore~\textbf{(\ref{item:idle_contribution:C}) is true} at any possible response from \LL{i}.
    (Note that \autoref{line:ll:reset_tag}, if executed, only changes the value of $d[i]$ from one non-$\bot$ value to another.)

    We next verify that~\textbf{(\ref{item:idle_contribution:R}) holds} at each relevant point, again using that~(\ref{item:idle_contribution:C}) to~(\ref{item:p_contribution:R}) and $d[i] = \bot$ hold immediately before \autoref{line:ll:inc_activity}.
    Since only $d[i]$ is modified in the method, it suffices to show that $d[i].\Tag \neq 0$ at the response if and only if $p$ increments its contribution to $R$ by one by its response.
    At a response in \autoref{line:ll:first_zero_tag}, $d[i].\Tag = 0$ and $R$ is untouched by $p$.
    At a response in \autoref{line:ll:matching_tag} and immediately before calling \CL{i} in \autoref{line:ll:clear}, $d[i].\Tag \neq 0$ and one element of $R$ is incremented.
    At a response in \autoref{line:ll:second_zero_tag}, $d[i].\Tag = 0$ and the increment of $R$ is reversed.
    Finally, since~(\ref{item:p_contribution:R}) and $d[i] = \bot$ hold immediately after \autoref{line:ll:reset_link}, it follows that setting $d[i]$ to a non-$\bot$ value in \autoref{line:ll:set_di} allows for one increment to $p$'s contribution to $R$ while still satisfying~(\ref{item:p_contribution:R}).
    Since $p$ can increment its contribution to $R$ only in \autoref{line:ll:inc_refcount} of the method,~\textbf{(\ref{item:p_contribution:R}) remains satisfied}, even if $p$ later reverses this increment.
  \end{proof}
}

\begin{lemma}\label{lem:more_inc_than_dec}
  \fullonly{
    Let $p$ be a process and $t$ a point in the execution.
    For each $i \in \set{1, \dots, m}$, the number of \FAI{} operations on $C[i]$ by $p$ before $t$ is greater than or equal to the number of \FAD{} operations on $C[i]$ by $p$ before $t$.
    Similarly, for each $g \in \set{1, \dots, 2k}$, the number of \FAI{} operations on $R[g]$ by $p$ before $t$ is greater than or equal to the number of \FAD{} operations on $R[g]$ by $p$ before $t$.
  }
  \confonly{Each process contributes a non-negative value to each element of $C$ and $R$.}
\end{lemma}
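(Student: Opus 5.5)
The plan is to prove the statement by showing that $p$'s contribution (\autoref{def:contribute}) to every element of $C$ and $R$ is non-negative at every point of the execution.
This is equivalent to the claim, because a contribution is exactly the number of \FAI{} operations minus the number of \FAD{} operations by $p$ on that element before $t$.
The key fact to exploit is that every \FAD{} in the code is paired with an earlier \FAI{} by the same process on the same element.
This pairing is recorded either through $d[i]$ or through a local variable of the current call.

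For $C$, I will rely on \autoref{lem:max_contribution}.
Part~(\ref{item:idle_contribution:C}) gives that at every invocation and response, $p$'s contribution to $C[i]$ is $0$ if $d[i]=\bot$ and $1$ otherwise.
The only \FAD{} on $C[i]$ is in \autoref{line:cl:dec_activity}.
It is reached only if $d[i]\neq\bot$ when \CL{i} is invoked, since otherwise the call returns in \autoref{line:cl:early_return}.
\CL{} may be invoked internally, from \LL{}, \SC{} or \VL{}, but in each case $d[i]$ is unchanged since the outer call's invocation, or was set in \autoref{line:ll:set_di} after the \FAI{} in \autoref{line:ll:inc_activity}.
So the contribution is $1$ immediately before the decrement and becomes $0$ after it, and between invocation and response points it never drops below $0$.

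For $R$, I will argue per tag $g$ by pairing operations.
There are three \FAD{} sites.
\begin{itemize}
\item Lines~\ref{line:cl:inc_refcount} and~\ref{line:cl:final_return} in \CL{} use the same local $tag$ within a single call, so they pair directly.
\item \autoref{line:ll:dec_refcount} pairs with the \FAI{} in \autoref{line:ll:inc_refcount} of the same \LL{} call.
\item \autoref{line:cl:dec_refcount} decrements $R[d[i].\Tag]$ with $d[i].\Tag\neq 0$.
\end{itemize}
For the third site, I will establish a small invariant by induction over $p$'s steps.
Whenever $d[i]\neq\bot$ with $d[i].\Tag=g\neq0$ outside the window between \autoref{line:ll:set_di} and \autoref{line:ll:inc_refcount}, $p$ has performed an unmatched \FAI{} on $R[g]$ attributable to that link.
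Such a \FAI{} is not yet cancelled by any \FAD{} and is distinct from the increments attributable to other keys $i'$.
The invariant is established at \autoref{line:ll:inc_refcount}, which increments $R[tag]$ where $tag=d[i].\Tag$.
It is maintained because \autoref{line:ll:reset_tag} sets the tag to $0$ only after \autoref{line:ll:dec_refcount} cancels that increment.
It is discharged exactly once, at \autoref{line:cl:dec_refcount}, immediately before $d[i]\gets\bot$.

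The main obstacle is the window inside \LL{} between \autoref{line:ll:set_di} and \autoref{line:ll:inc_refcount}, where $d[i]$ already holds a nonzero tag whose increment has not happened.
I will resolve it by noting that $p$ invokes no \CL{i} in that window, since \LL{} runs sequentially and executes no \CL{} there.
So no \FAD{} of that tag can occur before the matching \FAI{}.
Summing the pairings, every \FAD{} by $p$ on $R[g]$ is injectively matched to an earlier \FAI{} by $p$ on $R[g]$, which gives the inequality for $R$.
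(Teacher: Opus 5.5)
Your proof is correct. For $R$ it is essentially the paper's argument. The paper also maps each \FAD{} injectively to an earlier \FAI{} by the same process on the same element. It pairs \autoref{line:cl:final_return} with \autoref{line:cl:inc_refcount}, and \autoref{line:ll:dec_refcount} with \autoref{line:ll:inc_refcount}, inside one call. It pairs \autoref{line:cl:dec_refcount} with \autoref{line:ll:inc_refcount} of the most recent \LL{i} that set $d[i]$. Your invariant is a step-by-step formalization of this last pairing and of why it is one-to-one.

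The genuine difference is in how you handle $C$. The paper uses the same kind of pairing there, matching \autoref{line:cl:dec_activity} with \autoref{line:ll:inc_activity} of that most recent \LL{i}. You instead read the exact contribution off \autoref{lem:max_contribution}(\ref{item:idle_contribution:C}) at invocation and response points, and only inspect the steps in between. This is legitimate, since \autoref{lem:max_contribution} is proved earlier without using non-negativity, and it is shorter. The paper's version keeps this lemma self-contained and treats $C$ and $R$ uniformly by code inspection. Your split is in fact forced: \autoref{lem:max_contribution}(\ref{item:idle_contribution:R}) only controls the total contribution over all $g$, not each $R[g]$ individually, so pairing is still needed for $R$.

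One small imprecision: as stated, your $R$-invariant also fails between \autoref{line:ll:dec_refcount} and \autoref{line:ll:reset_tag}. There $d[i].\Tag$ is still nonzero, but its increment has already been cancelled. Exclude that window as well, using the same observation that $p$ executes no \CL{i} there.
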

\fullonly{
  \begin{proof}
    To prove the lemma, it suffices to show that, for each \FAD{} operation on an element of $C$ or $R$, there is a one-to-one mapping to a previous \FAI{} operation by the same process on the same element.
    We first define the mapping, and then show that it is one-to-one.

    A \FAD{} on $R$ in \autoref{line:ll:dec_refcount} maps to the process's previous execution of \autoref{line:ll:inc_refcount} in the same \LL{} method, both of which affect the same element of $R$.
    Similarly, a \FAD{} on $R$ in \autoref{line:cl:final_return} maps to the process's previous execution of \autoref{line:cl:inc_refcount} in the same \CL{} method, both of which affect the same element of $R$.

    If a \CL{i} call invoked by process $p$ proceeds past \autoref{line:cl:early_return}, then $p$ confirms that $d[i] \neq \bot$ in that line.
    Since the dictionary $d$ starts empty and only \LL{i} can set $d[i]$ to a non-$\bot$ value, there is some last \LL{i} call invoked by $p$ before the \CL{i} call which sets $d[i]$ to that non-$\bot$ value.
    Both of the remaining \FAD{} operations are in the \CL{i} method, so we map the decrements to the corresponding increments in this \LL{i} call.
    Specifically, \autoref{line:cl:dec_refcount} in \CL{i} corresponds to \autoref{line:ll:inc_refcount} in the corresponding \LL{i} call (which $p$ executes since $d[i].\Tag \neq 0$ when $p$ invokes \CL{i}), and \autoref{line:cl:dec_activity} in \CL{i} maps to \autoref{line:ll:inc_activity} of the corresponding \LL{i} call.
    In both cases, the \FAD{} and \FAI{} operations are by the same process on the same object.
    For lines~\ref{line:cl:dec_refcount} and~\ref{line:ll:inc_refcount} in particular, $d[i].\Tag$ stays the same because only \LL{i} can set $d[i]$ to a non-$\bot$ value, $d[i] \neq \bot$ at \CL{i}'s invocation, and the mapped execution of \autoref{line:ll:inc_refcount} is from the most recent \LL{i} call by $p$.

    We now show that each \FAI{} is mapped to by at most one \FAD{} on the same object (i.e., the mapping is one-to-one).
    This is immediate for the mapping from \autoref{line:cl:final_return} to \autoref{line:cl:inc_refcount}, since they involve the same method call and no other \FAD{} maps to \autoref{line:cl:inc_refcount}.

    Next, consider the \FAI{} on $R$ in \autoref{line:ll:inc_refcount} in some \LL{i} call by process $p$.
    If $p$ executes \autoref{line:ll:dec_refcount} in the same call, then it also sets $d[i].\Tag$ to zero in \autoref{line:ll:reset_tag} before returning from the call.
    Therefore, $p$ does not execute the \FAD{} in \autoref{line:cl:dec_refcount} of \CL{i} until after it invokes another \LL{i} which sets $d[i]$ to a non-$\bot$ value (and specifically sets $d[i].\Tag$ to a nonzero value).
    If $p$ does execute a \FAD{} in \autoref{line:cl:dec_refcount} which maps to the \FAI{} in \autoref{line:ll:inc_refcount}, then it resets $d[i]$ to $\bot$ in \autoref{line:cl:clear_di} before returning.
    Therefore, if $p$ invokes \CL{i} again before invoking another \LL{i} call which sets $d[i]$ to a non-$\bot$ value, $p$ does not get past \autoref{line:cl:early_return}, and in particular does not execute \autoref{line:cl:dec_refcount}.
    Thus, the \FAI{} operation in \autoref{line:ll:inc_refcount} can only be mapped to by at most one \FAD{} operation.

    Finally, consider the \FAI{} operation on $C$ in \autoref{line:ll:inc_activity} in some \LL{i} call by $p$, which can be mapped to by a \FAD{} operation in \autoref{line:cl:dec_activity} in a \CL{i} call by $p$.
    Before executing \autoref{line:cl:dec_activity}, $p$ resets $d[i]$ to $\bot$ in \autoref{line:cl:clear_di}, so it cannot get past \autoref{line:cl:early_return} again until invoking \LL{i} and setting $d[i]$ to a non-$\bot$ value again.
  \end{proof}
}

\begin{lemma}\label{lem:value_equals_contributions}
  The value of each element in $C$ and $R$ equals the sum of the contributions from each process (i.e., no element overflows).
\end{lemma}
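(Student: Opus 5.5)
The value of each FAID object is initially $0$, and every operation on it is a \FAI{} or \FAD{}. So I will prove, by induction over the steps of the execution, that every such operation stays within the representable range. That is, each $C[i]$ stays in $\set{0,\dots,n}$ and each $R[g]$ stays in $\set{0,\dots,k}$. If no operation ever leaves this range, no wrap-around occurs, and the value is just the number of \FAI{} minus \FAD{} operations applied so far. Regrouping these by process gives exactly the sum of the contributions of \autoref{def:contribute}.

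For the induction, assume the claim holds up to point $t$. Then the value of each element equals the sum of contributions at $t$. Consider the next step, a \FAI{} or \FAD{} by process $p$ on some element.

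\emph{Lower bound.} By \autoref{lem:more_inc_than_dec}, every process's contribution after the step is non-negative, so the sum of contributions is at least $0$. In particular a \FAD{} never occurs on an element whose true sum would become negative; the only way to underflow would be a decrement below zero, and that is ruled out.

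\emph{Upper bound for $R$.} By \autoref{lem:max_contribution}(\ref{item:tot_contribution:R}), the total contribution of all processes to all elements of $R$ is at most $k$ at all points. Contributions are non-negative, so each single $R[g]$ has sum of contributions at most $k$ after the step. Hence the new value lies in $\set{0,\dots,k}$.

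\emph{Upper bound for $C$.} Each process contributes at most one to $C[i]$. While $p$ is idle this follows from \autoref{lem:max_contribution}(\ref{item:idle_contribution:C}), and during a pending call from (\ref{item:contribution:C}). Summing over the $n$ processes gives a value of at most $n$.

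The main subtlety is that \autoref{lem:max_contribution} and \autoref{lem:more_inc_than_dec} are phrased purely in terms of contributions, that is, counts of operations executed. They do not depend on the returned values of the FAID operations, so using them here is not circular. I also need to check that the control flow they analyse depends on returned values only in ways already covered. The one such dependence is the test $C[i].\FAD{} > 1$ in \autoref{line:cl:dec_activity}, and the case analysis in \autoref{lem:max_contribution} handles both outcomes of that test. With this observation, the induction closes and the lemma follows.
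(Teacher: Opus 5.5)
Your proposal is correct and follows essentially the same route as the paper: each process's contribution is non-negative by \autoref{lem:more_inc_than_dec}, each process contributes at most one to each $C[i]$, and the total contribution to $R$ is at most $k$ by \autoref{lem:max_contribution}, so the sum of contributions never leaves the representable range. Your explicit step-by-step induction and your check that these lemmas do not depend on the values returned by the FAID objects make precise what the paper's shorter argument leaves implicit.
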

\fullonly{
  \begin{proof}
    From \autoref{def:contribute} and \autoref{lem:more_inc_than_dec}, we have that each process contributes a non-negative value to each element of $C$ and $R$.
    We also have that only \FAI{} and \FAD{} operations can affect that value of elements of $C$ or $R$.
    The lemma then follows from \autoref{lem:max_contribution}(\ref{item:contribution:C}) and~(\ref{item:tot_contribution:R}), since the $n$ processes can contribute at most $n$ to the value of each element of $C$, and at most $k$ in total to elements of $R$, which are their respective maximum values.
  \end{proof}
}

\begin{theorem}[Expected Complexity]\label{thm:complexity}
  Each method other than \SC{} has constant step complexity.
  The expected step complexity of \SC{} is constant under a weak adaptive adversary.
\end{theorem}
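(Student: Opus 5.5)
We first establish the deterministic part of the statement by inspecting the code. \CL{} contains no loops and performs a bounded number of shared-memory steps: at most one \FAD{} in \autoref{line:cl:dec_refcount}, one in \autoref{line:cl:dec_activity}, two reads of $L[i]$, one \FAI{} and one \FAD{} on $R$, one read of $C[i]$, and one \CAS{}. So \CL{} takes $O(1)$ steps. \Read{} takes a single step. \VL{} performs one read and possibly one call to \CL{}, so it is also $O(1)$. \LL{} performs at most two calls to \CL{} (in lines~\ref{line:ll:reset_link} and~\ref{line:ll:clear}) plus a constant number of other steps, so it is constant as well. For \SC{}, every step outside the repeat loop in lines~\ref{line:sc:repeat}--\ref{line:sc:check_tag} is $O(1)$: two \CAS{} operations and one call to \CL{}. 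It therefore remains to bound the expected number of loop iterations.

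For the loop, we show that each iteration terminates it with probability at least $1/2$, conditioned on the entire history so far. In one iteration the process flips a coin to choose $tag'$ uniformly from $\set{1,\dots,2k}$. Under the weak adaptive adversary, its very next step is the \Read{} of $R[tag']$, and no other process takes a step in between. At the moment of that coin flip the configuration is fixed. By \autoref{lem:value_equals_contributions}, every $R[g]$ equals the sum of the processes' contributions, and these are non-negative by \autoref{lem:more_inc_than_dec}. By \autoref{lem:max_contribution}(\ref{item:tot_contribution:R}), the total over all $g$ is at most $k$. Hence at most $k$ of the $2k$ entries of $R$ are nonzero at that moment. The value read is the value at the coin flip, so
\[
  \Pr\bigl[R[tag'].\Read{}=0 \bigm| \text{history}\bigr] \ge \tfrac{2k-k}{2k} = \tfrac12 .
\]

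Since this bound holds conditionally on any past, whatever the adversary's adaptive choices, the number of iterations is stochastically dominated by a geometric random variable with success probability $1/2$. Its expectation is therefore at most $2$, which makes the expected step complexity of \SC{} constant.

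The one delicate point is justifying that the adversary cannot change $R$ between the coin flip and the read. This is exactly the weak adaptive adversary assumption. We also need the count bound to hold at every point of the execution, including while other processes are midway through \LL{} or \CL{} with temporary increments. \autoref{lem:max_contribution}(\ref{item:tot_contribution:R}) provides this, since it is stated for all points.
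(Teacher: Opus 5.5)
Your proposal is correct and follows essentially the same argument as the paper: the only loop is the tag-selection loop in \SC{}, and by Lemmas~\ref{lem:more_inc_than_dec}, \ref{lem:value_equals_contributions} and~\ref{lem:max_contribution}(\ref{item:tot_contribution:R}) at least $k$ of the $2k$ entries of $R$ are zero at every point, so under the weak adaptive adversary each iteration exits with probability at least $1/2$. Your explicit per-method step counts and the geometric-domination step make precise what the paper states more briefly.
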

\begin{proof}
  The first part is immediate from the fact that the only loop in the entire implementation is \linerefrange{line:sc:repeat}{line:sc:check_tag} of the \SC{} method.

  For the second part, recall that under the weak adaptive adversary, a random choice by process $p$ is immediately followed by its next shared memory step.
  Thus, when $p$ chooses a random tag $g$ in \autoref{line:sc:choose_tag}, it then immediately reads $R[g]$ in \autoref{line:sc:check_tag} without any other process taking a step in between the two lines.
  In particular, the value of $R[g]$ does not change in between the two lines, and it stays at zero if it was zero at $p$'s execution of \autoref{line:sc:choose_tag}.
  Thus, $p$ breaks out of its loop as soon as it chooses a tag $g$ such that $R[g] = 0$ in \autoref{line:sc:choose_tag}.
  Since this random choice is uniform over $\set{1, \dots, 2k}$ and independent, it suffices to show that, at any given time, at least a constant fraction of the indices in $R$ have a value of zero.

  From Lemmas~\ref{lem:more_inc_than_dec} and~\ref{lem:value_equals_contributions}, a given element of $R$ is nonzero at time $t$ exactly if there exists some process that contributes to that element at time $t$.
  From \autoref{lem:max_contribution}\fullonly{(\ref{item:tot_contribution:R})}, the total contribution from all process to all elements of $R$ is at most $k$.
  Thus, at least $k$ of $R$'s $2k$ elements have a value of zero at any given time, which completes the proof.
\end{proof}

\subsection{History Independence}
The goal of this section is to prove \autoref{thm:history_independence}, which shows that when the system is in a settled state, then the shared memory depends only on the LL/SC objects' values, therefore achieving the QHI-preserving property discussed in \autoref{sec:qhi-preserving}.
In our case, we will prove that in a settled state, $L[i].\val$ stores the value of LL/SC object $i$, and all other shared memory used in the implementation equals zero, like in the initial state.

\fullonly{
  \begin{lemma}\label{lem:incremented_nonzero} 
    If a process has performed a \FAI{} operation on a given element of $C$ or $R$ before point $t$, and has not performed a subsequent \FAD{} operation before point $t$, then that element's value is nonzero at point $t$.
  \end{lemma}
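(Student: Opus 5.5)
The argument is a short combination of \autoref{def:contribute}, \autoref{lem:more_inc_than_dec}, and \autoref{lem:value_equals_contributions}. Fix a process $p$, an element $X$ of $C$ or $R$, and a point $t$. Suppose $p$ has executed a \FAI{} on $X$ before $t$, and has executed no \FAD{} on $X$ after its last such \FAI{}. I would show that $p$'s contribution to $X$ at $t$ is at least one, and that every other process contributes a non-negative amount.

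\textbf{Step 1: $p$'s contribution is positive.} Let $t_0$ be the point immediately after $p$'s last \FAI{} on $X$ before $t$. Between $t_0$ and $t$, $p$ executes neither \FAI{} nor \FAD{} on $X$, so by \autoref{def:contribute} its contribution at $t$ equals its contribution at $t_0$. Let $t_0^-$ be the point immediately before that \FAI{}. By \autoref{lem:more_inc_than_dec}, $p$'s contribution at $t_0^-$ is non-negative. The \FAI{} raises it by exactly one, so $p$ contributes at least one at $t_0$, and hence at $t$.

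\textbf{Step 2: the total is positive.} Again by \autoref{lem:more_inc_than_dec}, every other process contributes a non-negative value to $X$ at $t$. By \autoref{lem:value_equals_contributions}, the value of $X$ at $t$ equals the sum of all processes' contributions; this lemma is what guarantees that no overflow wraps the value around. The sum is therefore at least one, so $X$ is nonzero at $t$.

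The only potential obstacle is the reading of ``subsequent'': it must mean no \FAD{} after the \emph{last} \FAI{}, and Step 1 handles that case by choosing $t_0$ from the last \FAI{}. No other step needs more than citing the earlier lemmas.
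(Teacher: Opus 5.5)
Your proof is correct and follows essentially the same route as the paper. Nonnegativity of each process's contribution (\autoref{lem:more_inc_than_dec}) makes $p$'s contribution positive from its \FAI{} until its next \FAD{}, and \autoref{lem:value_equals_contributions} turns the positive total contribution into a nonzero value. Your worry about the reading of ``subsequent'' is unnecessary but harmless: having some \FAI{} with no later \FAD{} before $t$ is equivalent to having no \FAD{} after the last \FAI{}.
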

  \begin{proof}
    From \autoref{lem:more_inc_than_dec}, for each process and the given element $e$ of $C$ or $R$, the process's contribution to $e$ is non-negative at all points.
    Thus, immediately after process $p$ increments $e$, $p$'s contribution as well as the total contribution to $e$ becomes positive, and this remains true until $p$ next decrements $e$.
    The lemma then follows from \autoref{lem:value_equals_contributions}.
  \end{proof}
}

\begin{definition}[Protected and Strongly Protected Tags]\label{def:protect}
  A tag $g$ becomes \emph{protected} for LL/SC object $i$ at point $t$ by a process $p$ if either
  \begin{enumerate}[(a)]
    \item $p$ reads zero from $L[i].\Tag$ in \autoref{line:ll:first_read} or~\ref{line:ll:second_read} at point $t$, in which case $g = 0$, or\label{item:zero_protected}
    \item $p$ reads $g$ from $L[i].\Tag$ in \autoref{line:ll:second_read} or~\ref{line:cl:second_read} at point $t$, such that its local variable pair $(val, tag)$ (read from $L[i]$ in \autoref{line:ll:first_read} or~\ref{line:cl:first_read} respectively) equals the value of $L[i]$ at point $t$.\label{item:nonzero_protected}
  \end{enumerate}
  If a tag becomes protected for object $i$ during an \LL{i} call, then it is also \emph{strongly protected} for $i$ by $p$.
  In that case, the protection and strong protection by process $p$ end when it next invokes \CL{i}.
  Otherwise, the tag becomes protected for object $i$ by $p$ during a \CL{i} call, in which case it is not strongly protected by $p$, and the protection ends when that call responds.

  \fullonly{If a tag is protected or strongly protected by any process at point $t$ for object $i$, we also simply refer to it as protected or strongly protected for object $i$, without referring to the protecting process.}
\end{definition}
We will later show in \autoref{lem:invariant} that a tag protected for $i$ prevents ABAs on $L[i]$, and a tag strongly protected for $i$ prevents $L[i].\Tag$ from returning to zero from a different tag.
Note that although distinct LL/SC objects' \Tag fields can store the same tag, protection and strong protection always refer to a specific object.

\begin{lemma}\label{lem:protected_activity}
  If some tag is strongly protected for object $i$ at point $t$, then $C[i] \neq 0$ at point $t$.
\end{lemma}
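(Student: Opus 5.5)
The plan is to reduce the statement to \autoref{lem:incremented_nonzero}, applied to $C[i]$. By \autoref{def:protect}, if some tag is strongly protected for object $i$ at point $t$, then there is a process $p$ such that the tag became protected for $i$ during some \LL{i} call $\ell$ by $p$. This happens either in \autoref{line:ll:first_read} or in \autoref{line:ll:second_read}. Moreover, $p$ has not invoked \CL{i} between that protection point and $t$. It therefore suffices to show that $p$ has executed a \FAI{} on $C[i]$ before $t$ and has not executed a subsequent \FAD{} on $C[i]$ before $t$.

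For the increment, I would use the structure of \LL{i}. Both candidate protection points, \autoref{line:ll:first_read} and \autoref{line:ll:second_read}, come after \autoref{line:ll:inc_activity} within the same call $\ell$. So $p$ executes $C[i].\FAI{}$ in $\ell$ before the protection point, and hence before $t$.

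For the absence of a later decrement, note that the only \FAD{} on any element of $C$ is in \autoref{line:cl:dec_activity}. That line belongs to the \CL{} method, and the \FAD{} on $C[i]$ specifically can only occur inside a \CL{i} call. Such a call is either invoked directly or internally from \LL{i}, \SC{i,\cdot}, or \VL{i}.

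Here there is a subtle point, and I expect it to be the main obstacle. The internal \CL{i} in \autoref{line:ll:reset_link} of $\ell$ itself executes before \autoref{line:ll:inc_activity}, so its decrement is irrelevant to the \FAI{} in $\ell$. The internal \CL{i} in \autoref{line:ll:clear} of $\ell$ is only reached if $\ell$ did not return earlier. If the protection happened in \autoref{line:ll:second_read} with a matching pair, then $\ell$ returns in \autoref{line:ll:matching_tag}. If instead $\ell$ read tag zero, it returns in \autoref{line:ll:first_zero_tag} or \autoref{line:ll:second_zero_tag}. In every case, $\ell$ returns without reaching \autoref{line:ll:clear}. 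Every later \CL{i} invocation by $p$, including internal ones, ends the strong protection at its invocation, which is before its \FAD{}. Hence no \FAD{} on $C[i]$ by $p$ occurs between the \FAI{} in $\ell$ and $t$.

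With both facts in hand, \autoref{lem:incremented_nonzero} gives $C[i] \neq 0$ at $t$. Underneath, this relies on the nonnegativity of contributions (\autoref{lem:more_inc_than_dec}) and on the absence of overflow (\autoref{lem:value_equals_contributions}). The only care needed is to read "invokes \CL{i}" in \autoref{def:protect} as including internal invocations, and to check that the protection point rules out reaching \autoref{line:ll:clear} in the same \LL{i} call.
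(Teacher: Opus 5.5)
Your proof is correct and follows the paper's argument. The strong protection arises in \autoref{line:ll:first_read} or~\ref{line:ll:second_read} of an \LL{i} call, after the \FAI{} on $C[i]$ in \autoref{line:ll:inc_activity}; the only decrement of $C[i]$ is in \autoref{line:cl:dec_activity} of a \CL{i} call whose invocation already ends the protection, and \autoref{lem:incremented_nonzero} then finishes. Your extra check that the protecting \LL{i} call never reaches \autoref{line:ll:clear} is harmless but not needed, because any \CL{i} invocation after the protection point, internal or not, would itself have ended the protection before $t$.
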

\fullonly{
  \begin{proof}
    From \autoref{def:protect}, a tag can only become strongly protected for $i$ by a process $p$ in an \LL{i} call in \autoref{line:ll:first_read} or~\ref{line:ll:second_read}, and both lines happen after $p$ increments $C[i]$ in \autoref{line:ll:inc_activity} and before its next invocation of \CL{i}.
    Since only \CL{i} can decrement $C[i]$ (in \autoref{line:cl:dec_activity}), it follows that $p$'s strong protection of $g$ for $i$ ends before it decrements $C[i]$.
    The lemma then follows from \autoref{lem:incremented_nonzero}.
  \end{proof}
}

\begin{lemma}\label{lem:idle_protection}
  If a process $p$ has no pending method call on object $i$ at point $t$, then
  \begin{enumerate}[(a)]
    \item $p$ strongly protects tag $g$ for object $i$ if and only if $p$'s value of $d[i].\Tag = g$ at point $t$, and\label{item:di_iff_protected}
    \item if $p$ does strongly protect $g$, then $p$'s value of $d[i]$ at point $t$ equals the value of $L[i]$ when this strong protection started.\label{item:di_Li_match_when_protected}
  \end{enumerate}
\end{lemma}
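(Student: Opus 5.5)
We argue by induction over $p$'s method calls on object $i$, in the same style as the proof of \autoref{lem:max_contribution}. At any point where $p$ is idle with respect to object $i$, the variable $d[i]$ and $p$'s strong protection for $i$ can only have changed during $p$'s most recent call on $i$. So it suffices to check the claim at the response of each method call on $i$. Initially $d[i]=\bot$ and no tag is strongly protected for $i$ by $p$, so~(\ref{item:di_iff_protected}) holds, reading $d[i].\Tag$ as undefined when $d[i]=\bot$, and~(\ref{item:di_Li_match_when_protected}) holds vacuously.

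Next we handle every method other than \LL{i}. A \Read{i} call changes neither $d[i]$ nor any protection. A \CL{i} call either returns in \autoref{line:cl:early_return}, with nothing changed, or sets $d[i]$ to $\bot$. By \autoref{def:protect}, invoking \CL{i} ends any strong protection by $p$ for $i$. The protection that \CL{i} itself may establish in \autoref{line:cl:second_read} is not strong and ends at the response. So at the response $d[i]=\bot$ and $p$ strongly protects nothing for $i$. \SC{i,\cdot} and \VL{i} change $d[i]$ and protection only through their internal \CL{i} calls. Hence, by \autoref{lem:cl_clears_di} and the \CL{} case, they either leave everything unchanged (early returns, or \VL{} returning \True) or end with $d[i]=\bot$ and no strong protection.

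The main case is \LL{i}. The internal \CL{i} in \autoref{line:ll:reset_link} clears $d[i]$ and any old strong protection. We then do a case analysis on the response line, using \autoref{def:protect}.
\begin{itemize}
\item Response in \autoref{line:ll:first_zero_tag}: $p$ read tag $0$ in \autoref{line:ll:first_read}, so tag $0$ became strongly protected at that read, and $d[i]=(val,0)$ equals $L[i]$ at that point.
\item Response in \autoref{line:ll:matching_tag}: the read in \autoref{line:ll:second_read} returned exactly $(val,tag)$. This is case~(\ref{item:nonzero_protected}), so $tag$ becomes strongly protected, and $d[i]=(val,tag)$ equals $L[i]$ at that point.
\item Response in \autoref{line:ll:second_zero_tag}: $p$ read $(val',0)$ in \autoref{line:ll:second_read}, so case~(\ref{item:zero_protected}) applies with $g=0$. \autoref{line:ll:reset_tag} sets $d[i]=(val',0)$, which matches $L[i]$ at that read.
\item Response in \autoref{line:ll:final_return}: the internal \CL{i} in \autoref{line:ll:clear} ends the protection and sets $d[i]=\bot$.
\end{itemize}

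The main obstacle is making \autoref{def:protect} pin down a single protected tag. A nonzero tag $g$ read in \autoref{line:ll:first_read} is not itself protected, because case~(\ref{item:zero_protected}) covers only tag zero. In the zero-at-second-read case, only the tag read at \autoref{line:ll:second_read} counts. We must also interpret ``the strong protection started'' as the most recent protection event in the current \LL{i} call, since earlier ones were ended by the \CL{i} in \autoref{line:ll:reset_link}. Once it is established that each completing \LL{i} call creates at most one strong protection, and that this protection is exactly the last protection event, the two directions of~(\ref{item:di_iff_protected}) and the claim~(\ref{item:di_Li_match_when_protected}) follow from the case analysis above.
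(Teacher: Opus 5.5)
Your proposal is correct and follows essentially the same argument as the paper. Both use induction over $p$'s calls on object $i$, relying on the fact that every invocation of \CL{i} ends $p$'s strong protection for $i$ and leaves $d[i]=\bot$ at its response (which also settles \SC{} and \VL{}), followed by a case analysis on the response lines of \LL{i} after the reset in \autoref{line:ll:reset_link}; your explicit check that each \LL{i} call creates at most one strong protection is left implicit in the paper's case analysis, and you handle it correctly.
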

\fullonly{
  Note that \autoref{lem:idle_protection} also implies that $p$ can only protect at most one tag per object while idle.
  \begin{proof}
    We first note that the only case in which $p$ can protect a tag for $i$ while not having a pending method call on object $i$, by \autoref{def:protect}, is if it protected the tag as part of an \LL{i} call.
    In particular, this implies that $p$ also strongly protects the tag.
    Moreover, only \LL{i} can set $d[i]$ to a non-$\bot$ value (in lines~\ref{line:ll:set_di} and~\ref{line:ll:reset_tag}) and only \CL{i} can reset $d[i]$ to $\bot$ (in \autoref{line:cl:clear_di}).
    An invocation of \CL{i} ends $p$'s (strong) protection of any tag for object $i$ by \autoref{def:protect}, and, from \autoref{lem:cl_clears_di}, $d[i] = \bot$ when \CL{i} responds.
    To prove~(\ref{item:di_iff_protected}), it remains to show that an \LL{i} call strongly protects tag $g$ for $i$ by its response if and only if $d[i].\Tag = g$ at that call's response.
    To prove~(\ref{item:di_Li_match_when_protected}), we need to also show that $d[i]$ is set to the value read from $L[i]$ at the point that $g$ become strongly protected.

    To show these, we consider the five possible response points from an \LL{i} call by process $p$.
    Suppose by induction that the lemma is true before the call begins (the base case holds since $d[i] = \bot$ and processes protect no tags at $t = 0$).
    If $p$ protects some tag $g$ for $i$ just before invocation of the \LL{i} call, then $d[i].\Tag = g$ before invocation, and therefore $p$ invokes \CL{i} in \autoref{line:ll:reset_link}.
    Thus, after \autoref{line:ll:reset_link}, $p$ does not protect any tag for $i$, and $d[i] = \bot$.

    If $p$ responds in \autoref{line:ll:first_zero_tag} or~\ref{line:ll:second_zero_tag}, then $p$ reads zero from $L[i].\Tag$ in \autoref{line:ll:first_read} or~\ref{line:ll:second_read} respectively; at the point of that read, by \autoref{def:protect}(\ref{item:zero_protected}), $p$ strongly protects tag zero for $i$.
    In \autoref{line:ll:set_di} or~\ref{line:ll:reset_tag} respectively, $p$ sets $d[i]$ to the value read from $L[i]$ at the point that it started strongly protected tag zero, then it responds without further changing $d[i]$.

    If $p$ responds in \autoref{line:ll:matching_tag}, then $p$ re-reads its local variables $(val, tag)$ from $L[i]$ in \autoref{line:ll:second_read}, after having set $d[i]$ to $(val, tag)$ in \autoref{line:ll:set_di} and not changing it afterwards.
    By \autoref{def:protect}(\ref{item:nonzero_protected}), $p$ strongly protects $d[i].\Tag$ in \autoref{line:ll:second_read}, at which point $L[i] = d[i]$.

    Otherwise, $p$ responds in \autoref{line:ll:final_return}, it which case the \CL{i} call invoked in \autoref{line:ll:clear} resets $d[i]$ to $\bot$ and ends any protection of a tag by $p$ for $i$.
  \end{proof}
}

\begin{lemma}\label{lem:protect_during_sc}
  Suppose that a successful \CAS{} takes place on $L[i]$ at point $t$, such that $L[i].\Tag = g_1$ immediately before $t$ and $L[i].\Tag = g_2 \neq 0$ immediately after $t$.
  At point $t$, some process $p$ executes \autoref{line:sc:first_cas} or~\ref{line:sc:second_cas} of an \SC{i, \cdot} call.
  Process $p$ strongly protects some tag $g^\star$ for $i$ from the invocation of its call until at least point $t$, and $g^\star = g_1$ if $g_1 \neq 0$.
\end{lemma}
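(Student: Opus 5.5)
The first step is to identify which line performs the CAS at $t$. Every successful \CAS{} on $L[i]$ happens in \autoref{line:sc:first_cas}, \autoref{line:sc:second_cas}, or \autoref{line:cl:cas}. The one in \autoref{line:cl:cas} writes $(val,0)$, so it leaves tag zero behind. Since $g_2 \neq 0$, the CAS at $t$ must come from an \SC{i,\cdot} call by some process $p$, in one of the two \SC{} lines. Because $p$ got past \autoref{line:sc:early_return}, we know $d[i] \neq \bot$ when the call was invoked.

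Next I show that $p$ strongly protects a tag $g^\star$ at invocation. At invocation $p$ has no pending method call on object $i$, so \autoref{lem:idle_protection}(\ref{item:di_iff_protected}) applies: $p$ strongly protects $g^\star := d[i].\Tag$ for $i$ at that moment. By \autoref{def:protect}, this strong protection lasts until $p$ next invokes \CL{i}. In \SC{}, the only \CL{i} invocation is at \autoref{line:sc:clear}, which comes after both \CAS{} lines. The loop in \linerefrange{line:sc:repeat}{line:sc:check_tag} touches neither $d[i]$ nor any counter. Hence $p$ strongly protects $g^\star$ from invocation through $t$, and $d[i]$ keeps its invocation value throughout.

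The main remaining point is to show $g^\star = g_1$ whenever $g_1 \neq 0$. The CAS at $t$ succeeded, so immediately before $t$ the contents of $L[i]$ equalled its expected argument. I split on the line.
\begin{itemize}
\item If the CAS is in \autoref{line:sc:first_cas}, the expected argument is $d[i]$. Then $g_1 = d[i].\Tag = g^\star$, and this holds whether or not $g_1 = 0$.
\item If the CAS is in \autoref{line:sc:second_cas}, the expected argument is $(d[i].\val, 0)$. Then $g_1 = 0$, and the claim about $g_1 \neq 0$ is vacuous.
\end{itemize}
In either case the lemma follows.

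I expect the only delicate step to be applying \autoref{lem:idle_protection}. It needs $p$ to have no pending method call on $i$ at invocation, and the invocation point of the \SC{} call is exactly such a point. I would therefore apply the lemma at the instant just before the invocation, and note that the protection persists into the call because no \CL{i} is invoked before \autoref{line:sc:clear}.
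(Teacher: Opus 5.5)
Your proposal is correct and follows essentially the same route as the paper. You rule out \autoref{line:cl:cas} because it writes tag zero, and you apply \autoref{lem:idle_protection}(\ref{item:di_iff_protected}) at the \SC{} invocation to get strong protection of $d[i].\Tag$, which persists because no \CL{i} is invoked before the \CAS{}. You also observe that $g_1 \neq 0$ forces the \CAS{} to be the one in \autoref{line:sc:first_cas}; your explicit case split and your care in applying the lemma just before invocation are slightly more detailed than the paper's version, but the argument is the same.
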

\fullonly{
  \begin{proof}
    Since only lines~\ref{line:sc:first_cas} and~\ref{line:sc:second_cas} can change $L[i].\Tag$ to a nonzero value, it follows that point $t$ is the point that some process $p$ executes one of the two lines.
    Since $p$ does not return in \autoref{line:sc:early_return}, by \autoref{lem:idle_protection}(\ref{item:di_iff_protected}), $p$ strongly protects $g^\star = d[i].\Tag$ for $i$ when it invokes \SC{i, \cdot}.
    Moreover, since $p$ does not invoke \CL{i} in between invoking \SC{i, \cdot} and its successful \CAS{} at point $t$, by \autoref{def:protect} $g^\star$ remains strongly protected for $i$ until at least point $t$.
    Finally, if $g_1 \neq 0$, then point $t$ can only be an execution of \autoref{line:sc:first_cas}, in which case $g^\star = g_1$.
  \end{proof}
}

\begin{lemma}\label{lem:unprotected}
  If at point $t$ no process has a pending method call on object $i$ and no process strongly protects a tag for $i$, then $C[i] = 0$.
  If at point $t$ no process has a pending method call on any object and no process strongly protects any tag, then $R[g] = 0$ for all $g \in \set{1, \dots, 2k}$.
\end{lemma}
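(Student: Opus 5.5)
The plan is to argue through contributions. By \autoref{lem:value_equals_contributions}, each element of $C$ and $R$ equals the sum of the processes' contributions. By \autoref{lem:more_inc_than_dec}, each contribution is non-negative. It therefore suffices to show that, under the stated hypotheses, every process $p$ contributes zero to $C[i]$ (first claim) and to every $R[g]$ (second claim).

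For the first claim, fix a process $p$. Since $p$ has no pending method call on object $i$ at $t$, I would apply \autoref{lem:max_contribution}(\ref{item:idle_contribution:C}) at $p$'s last invocation or response point before $t$. That item is stated for invocation and response points of any method call, so I need to check that $p$'s contribution to $C[i]$ and its value of $d[i]$ do not change during calls on other objects $j\neq i$. A quick inspection of the pseudocode confirms this: only \LL{i} and \CL{i}, including internal calls from \SC{i,\cdot} and \VL{i}, touch $C[i]$ or $d[i]$. So $p$'s contribution to $C[i]$ at $t$ is $1$ if $d[i]\neq\bot$ and $0$ otherwise. Now suppose $d[i]\neq\bot$. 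By \autoref{lem:idle_protection}(\ref{item:di_iff_protected}), $p$ then strongly protects the tag $d[i].\Tag$ for $i$, which contradicts the hypothesis. Hence $d[i]=\bot$, $p$ contributes $0$, and summing over processes gives $C[i]=0$.

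For the second claim, every process is idle at $t$. By \autoref{lem:max_contribution}(\ref{item:idle_contribution:R}), applied at $p$'s last response (or at the initial state), $p$'s total contribution to $R$ equals the number of entries of $d$ with nonzero tags. By \autoref{lem:idle_protection}(\ref{item:di_iff_protected}), applied to each object $j$, any non-$\bot$ entry $d[j]$ means $p$ strongly protects $d[j].\Tag$ for $j$, which is excluded. So $d$ is empty, $p$'s total contribution to $R$ is $0$, and by non-negativity its contribution to each $R[g]$ is $0$. Summing over processes yields $R[g]=0$ for all $g$.

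The main obstacle is the bookkeeping in the first claim. \autoref{lem:max_contribution} is phrased in terms of $p$ being idle, but here $p$ may have a pending call on some other object. The step that needs care is arguing that such a call leaves $p$'s contribution to $C[i]$ and its $d[i]$ invariant, which is a per-line check of the methods; I would state it as a short observation before applying the two lemmas.
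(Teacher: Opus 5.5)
Your proposal is correct and follows essentially the same route as the paper: \autoref{lem:idle_protection}(\ref{item:di_iff_protected}) forces $d[i]=\bot$ (resp.\ $d$ empty) for every process, and then \autoref{lem:max_contribution}(\ref{item:idle_contribution:C}) (resp.\ (\ref{item:idle_contribution:R})) together with \autoref{lem:value_equals_contributions} gives $C[i]=0$ and $R[g]=0$. Your extra observation that calls on objects $j\neq i$ never touch $C[i]$ or $d[i]$ makes explicit a step the paper leaves implicit, namely applying \autoref{lem:max_contribution}(\ref{item:idle_contribution:C}) to a process that may still have a pending call on another object.
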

\fullonly{
  \begin{proof}
    Suppose that no process has a pending method call on object $i$ and no process strongly protects any tag for $i$ at point $t$.
    If $d[i] \neq \bot$ for some process, then $d[i].\Tag = g$ for some tag $g \in \set{0, \dots, 2k}$, and from \autoref{lem:idle_protection}(\ref{item:di_iff_protected}) $p$ would strongly protect $g$ at point $t$: thus, $d[i] = \bot$ for all processes at point $t$.
    \autoref{lem:max_contribution}(\ref{item:idle_contribution:C}) and \autoref{lem:value_equals_contributions} then imply that $C[i] = 0$ at point $t$.
    If no process has any pending method call or strongly protects any tag at point $t$, then \autoref{lem:max_contribution}(\ref{item:idle_contribution:R}) and \autoref{lem:value_equals_contributions} imply that $R[g] = 0$ for all $g$.
  \end{proof}
}

\begin{lemma}\label{lem:quiescent_zero_tag}
  If at point $t$ no process has a pending method call on object $i$ and no process strongly protects a tag for $i$, then $L[i].\Tag = 0$ at point $t$.
\end{lemma}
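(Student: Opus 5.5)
The plan is to argue by contradiction, looking at the last write that gave $L[i].\Tag$ a nonzero value. Fix $t$ as in the statement. The first premise of \autoref{lem:unprotected} is exactly our hypothesis, so $C[i]=0$ at $t$. Suppose, for contradiction, that $L[i].\Tag = g \neq 0$ at $t$.

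First I identify that last write. Initially $L[i].\Tag=0$, and the only steps that write a nonzero tag are successful \CAS{} operations in \autoref{line:sc:first_cas} or~\ref{line:sc:second_cas}. So there is a last such step before $t$; call its point $t_1$, its executing process $p$, and the new content of $L[i]$ $(v,g)$. Any later successful \CAS{} on $L[i]$ before $t$ must be in \autoref{line:cl:cas}, which writes tag $0$. After such a write no nonzero tag could reappear before $t$, which would contradict $L[i].\Tag=g\neq 0$ at $t$. Hence $L[i]=(v,g)$ throughout $[t_1,t]$. By \autoref{lem:protect_during_sc}, $p$ strongly protects some tag for $i$ at $t_1$, so \autoref{lem:protected_activity} gives $C[i]\geq 1$ at $t_1$.

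Next I find the last decrement of $C[i]$ to zero. Since $C[i]\ge 1$ at $t_1$ and $C[i]=0$ at $t$, there is a last step $t_2\in(t_1,t]$ that changes $C[i]$ from $1$ to $0$, and $C[i]=0$ throughout $[t_2,t]$. The only decrement of $C$ is \autoref{line:cl:dec_activity}, so $t_2$ is a \FAD{} in that line, returning $1$, inside a \CL{i} call by some process $q$. Because no call on object $i$ is pending at $t$, this call responds before $t$. Since the \FAD{} returned $1$, $q$ does not return in \autoref{line:cl:dec_activity}.

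Finally I trace the rest of $q$'s call, all of which lies in $(t_2,t)$, where $L[i]=(v,g)$ and $C[i]=0$. In \autoref{line:cl:first_read} $q$ reads $(v,g)$ with $g\neq 0$, so it does not return in \autoref{line:cl:zero_tag_return}. Its reread in \autoref{line:cl:second_read} again sees $(v,g)$. In \autoref{line:cl:confirm_quiescence} it reads $C[i]=0$. It therefore executes \CAS{(v,g),(v,0)} in \autoref{line:cl:cas}, which succeeds because $L[i]=(v,g)$ at that moment. This makes $L[i].\Tag=0$ before $t$, contradicting the constancy of $L[i]$ on $[t_1,t]$.

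The step needing the most care is establishing that $L[i]$ is constant on $[t_1,t]$ and that the decrement at $t_2$ really falls after $t_1$. The first rests on the fact that only \autoref{line:cl:cas} can write after $t_1$ and that it can only write tag $0$. The second rests on $C[i]\ge 1$ at $t_1$, which is where \autoref{lem:protect_during_sc} and \autoref{lem:protected_activity} are needed. Once these are in place, the trace through $q$'s \CL{i} call is routine.
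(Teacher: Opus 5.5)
Your proof is correct. It uses the same core mechanism as the paper's proof: the last \CL{i} call to decrement $C[i]$ to zero must, in its remaining steps, leave $L[i].\Tag = 0$. It also relies on the same three lemmas (\autoref{lem:unprotected}, \autoref{lem:protected_activity}, \autoref{lem:protect_during_sc}), but the argument is organized differently.

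The paper first reduces, by a case analysis over method types (effectively an induction over time), to the case where $t$ is the response point of a \CL{i} call. That call supplies a decrement of $C[i]$, and hence a last decrement-to-zero point $t'$. The paper then applies \autoref{lem:protected_activity} and \autoref{lem:protect_during_sc} forward from $t'$ to rule out nonzero-tag writes in $(t',t]$.

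You instead start from the last nonzero-tag write $t_1$ and apply the same two lemmas at $t_1$ to obtain $C[i]\ge 1$ there. This is what forces the final decrement-to-zero $t_2$ to lie after $t_1$. As a result, you need no without-loss-of-generality reduction, and the case in which no nonzero tag was ever written is handled trivially.

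Arguing by contradiction also lets you follow the \CL{i} call along a single branch, the one in which $L[i]$ stays $(v,g)$. The paper's direct argument must also handle the branches where the tag is already zero at \autoref{line:cl:first_read}, at \autoref{line:cl:second_read}, or when the \CAS{} in \autoref{line:cl:cas} fails.
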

\begin{proof}
  The lemma is true for $t = 0$, since $L[i].\Tag$ is initialized to zero for each $i \in \set{1, \dots, m}$.
  The \Read{i} method does not affect which tags the calling process protects or the value of $L[i].\Tag$, so we need not consider it for this lemma.
  Similarly, by \autoref{lem:idle_protection}(\ref{item:di_iff_protected}), we have that method calls which respond in either \autoref{line:sc:early_return},~\ref{line:cl:early_return}, or~\ref{line:vl:early_return} do not affect which tags the calling process strongly protects or the value of $L[i].\Tag$.
  Method calls on objects $j \neq i$ also do not affect whether the calling process strongly protects a tag for $i$ of the value of $L[i].\Tag$.
  Note that all the method calls mentioned so far are either not for object $i$, or consist of at most one shared memory step.
  Therefore, none of them change whether a process has a pending method call on object $i$.
  We can therefore assume without loss of generality that point $t$ is the response point of a method not discussed so far.%
  \fullonly{

    By inspection of the code and by \autoref{def:protect} and \autoref{lem:idle_protection}(\ref{item:di_iff_protected}), the remaining method calls for which the calling process responds without strongly protecting a tag for $i$ at the response point are exactly
    \begin{enumerate}[(a)]
      \item an \LL{i} call that responds in \autoref{line:ll:final_return},
      \item an \SC{i, \cdot} call that responds in \autoref{line:sc:return},
      \item a \CL{i} call that responds after \autoref{line:cl:early_return}, and
      \item a \VL{i} call that responds in \autoref{line:vl:return_false}.
    \end{enumerate}
    Since the%
  }
  \confonly{Since the remaining }cases of responding in either \autoref{line:ll:final_return},~\ref{line:sc:return}, or~\ref{line:vl:return_false} all immediately follow a \CL{i} call, we can further assume that point $t$ is the response point of a \CL{i} call $c$.

  By \autoref{lem:unprotected}, $C[i] = 0$ at point $t$.
  Since call $c$ does not respond in \autoref{line:cl:early_return}, it decrements $C[i]$ in \autoref{line:cl:dec_activity}.
  Therefore, there is a point $t'$ at which $C[i]$ is last decremented to zero before point $t$.
  (Since $t$ is the response of a \CL{i} call, and a \CL{i} call does not respond in \autoref{line:cl:dec_activity} if it decrements $C[i]$ to zero, it follows that $t$ is not itself a point in which $C[i]$ is decremented to zero.)
  Since only \autoref{line:cl:dec_activity} decrements $C[i]$, it follows that some \CL{i} call $c'$ (possibly $c' = c$) decrements $C[i]$ to zero at point $t'$.
  By definition of $t'$, $C[i] = 0$ throughout $(t', t]$, and so by \autoref{lem:protected_activity} we have that no tag is strongly protected for $i$ at any point in $(t', t]$.
  From \autoref{lem:protect_during_sc}, no successful \CAS{} takes place on $L[i]$ in $(t', t]$ that sets $L[i].\Tag$ to a nonzero value.
  It therefore suffices to show that $L[i].\Tag = 0$ at some point in $(t', t]$.

  Let $p$ be the process executing $c'$.
  Since no process has a pending method call on object $i$ at point $t$, it follows that $c'$ responds no later than point $t$.
  If $c'$ responds in \autoref{line:cl:zero_tag_return}, then $L[i].\Tag = 0$ when $p$ executes \autoref{line:cl:first_read} after point $t'$.
  Otherwise, since $L[i].\Tag$ does not change to another nonzero value in $(t', t]$, when $p$ reads $L[i]$ for the second time in \autoref{line:cl:second_read}, either $L[i].\Tag = 0$, or $L[i]$ is unchanged.
  If $L[i]$ is unchanged in \autoref{line:cl:second_read}, then since $C[i] = 0$ throughout $(t', t]$, $p$ executes \autoref{line:cl:cas}.
  If $p$ does execute \autoref{line:cl:cas}, then either the \CAS{} successfully changes $L[i].\Tag$ to zero, or it fails because $L[i].\Tag$ is already zero.
\end{proof}

\confonly{
  \begin{lemma}\label{lem:nonbot_outstanding_ll}
    At point $t$, if $d[i] \neq \bot$ for process $p$, then $p$ has an outstanding \LL{i}.
  \end{lemma} 
}

\begin{theorem}[QHI-Preserving Property]\label{thm:history_independence}
  If every process is idle and has no outstanding \LL{} operation at point $t$, then all values of $C$, $R$, and all \Tag fields of $L$ are zero at point $t$.
\end{theorem}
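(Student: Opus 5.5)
The plan is to reduce the statement to the hypotheses of \autoref{lem:unprotected} and \autoref{lem:quiescent_zero_tag}. Both lemmas require that at point $t$ no process has a pending method call (on object $i$, respectively on any object) and that no process strongly protects a tag (for $i$, respectively any tag). The first hypothesis is given directly, since every process is idle at $t$. So the main work is to show that no process strongly protects any tag at $t$.

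For that step, fix a process $p$ and an object $i$. Since $p$ has no outstanding \LL{i} at $t$, the contrapositive of \autoref{lem:nonbot_outstanding_ll} gives that $p$'s value of $d[i]$ is $\bot$ at $t$. In particular, $d[i].\Tag$ equals no tag $g\in\set{0,\dots,2k}$. Because $p$ is idle, \autoref{lem:idle_protection}(\ref{item:di_iff_protected}) applies: $p$ strongly protects $g$ for $i$ exactly when $d[i].\Tag=g$. Hence $p$ strongly protects no tag for $i$. This holds for every $p$ and every $i$, so no tag is strongly protected for any object at $t$.

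Now apply the lemmas. For each $i$, the first part of \autoref{lem:unprotected} gives $C[i]=0$, and \autoref{lem:quiescent_zero_tag} gives $L[i].\Tag=0$. Since no process has a pending call on any object and no tag is strongly protected, the second part of \autoref{lem:unprotected} gives $R[g]=0$ for all $g\in\set{1,\dots,2k}$.

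I expect the only delicate point to be the step translating ``no outstanding \LL{}'' into ``$d[i]=\bot$''. It depends on the direction of \autoref{lem:nonbot_outstanding_ll}: a non-$\bot$ entry implies an outstanding \LL{}, so no outstanding \LL{} implies a $\bot$ entry, which is exactly the direction needed. Once that step is in place, the remainder is a direct combination of the earlier lemmas.
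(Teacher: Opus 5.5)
Your proposal is correct and follows the paper's proof essentially step for step. The paper likewise uses \autoref{lem:nonbot_outstanding_ll} to get $d[i]=\bot$ for every process and object, then \autoref{lem:idle_protection}(\ref{item:di_iff_protected}) to conclude that no tag is strongly protected, and finishes by invoking Lemmas~\ref{lem:unprotected} and~\ref{lem:quiescent_zero_tag}.
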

\begin{proof}
  From \autoref{lem:nonbot_outstanding_ll}, all values of $d$ are $\bot$ for each process at point $t$.
  From \autoref{lem:idle_protection}(\ref{item:di_iff_protected}), no process strongly protects a tag at point $t$.
  The theorem follows from Lemmas~\ref{lem:unprotected} and~\ref{lem:quiescent_zero_tag}.
\end{proof}

\subsection{Correctness}

We conclude the analysis by proving, in \autoref{thm:linearizable}, that the implementation is linearizable.
We prove this last because the \confonly{full} proof relies on several lemmas used to show complexity and history independence, making linearizability the most involved proof overall.

\begin{lemma}\label{lem:protected_refcount}
  If tag $g \neq 0$ is protected for some object at point $t$, then $R[g] \neq 0$ at point $t$.
\end{lemma}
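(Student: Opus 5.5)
We reduce the statement to \autoref{lem:incremented_nonzero}: it suffices to show that whenever a process $p$ protects a nonzero tag $g$ for object $i$ at point $t$, $p$ has executed a \FAI{} on $R[g]$ before $t$ and no subsequent \FAD{} on $R[g]$ before $t$. Since $g \neq 0$, the protection must have started via \autoref{def:protect}(\ref{item:nonzero_protected}), i.e., at $p$'s read in \autoref{line:ll:second_read} of an \LL{i} call or in \autoref{line:cl:second_read} of a \CL{i} call. In both cases $tag = g$ for the local variable read in \autoref{line:ll:first_read} or \autoref{line:cl:first_read}, because the second read returns exactly $(val, tag)$. Immediately before the second read, $p$ executed \autoref{line:ll:inc_refcount} or \autoref{line:cl:inc_refcount}, which is a \FAI{} on $R[tag] = R[g]$.

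It remains to show that $p$ performs no \FAD{} on $R[g]$ between this increment and $t$. Here I would make sure the accounting is per-increment. \autoref{lem:incremented_nonzero} is phrased as ``no subsequent \FAD{}'', but $p$ may well decrement $R[g]$ while still protecting $g$ through some other increment. So I would instead argue directly that $p$'s contribution to $R[g]$ stays positive throughout the protection interval. By the one-to-one matching constructed in the proof of \autoref{lem:more_inc_than_dec}, $p$'s contribution to $R[g]$ is at least the number of its \FAI{}s on $R[g]$ whose matched \FAD{} has not yet occurred. It therefore suffices to show that the specific \FAI{} above is not yet matched at $t$, and then to combine \autoref{lem:more_inc_than_dec} with \autoref{lem:value_equals_contributions} to conclude that $R[g] \geq 1$.

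The case analysis follows the possible matches, and this is the main obstacle, since the matching must be checked against the exact endpoints of protection in \autoref{def:protect}.

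In the \CL{} case, the matched \FAD{} is \autoref{line:cl:final_return} of the same call. The protection ends at that call's response, so before $t$ only the earlier steps have occurred; if $t$ were after the \FAD{}, the call would already have responded.

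In the \LL{} case, the protection arises in \autoref{line:ll:second_read} with a matching read, so $p$ returns in \autoref{line:ll:matching_tag} and does not execute \autoref{line:ll:dec_refcount}. The only other possible match is \autoref{line:cl:dec_refcount} in $p$'s next \CL{i}. But strong protection ends exactly at the invocation of that \CL{i}, which precedes its \autoref{line:cl:dec_refcount}. Hence at $t$ the increment is unmatched.

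Finally, we must exclude the possibility that the protecting \LL{i} is later overridden by \autoref{line:ll:reset_tag}. This cannot happen because that line is only reached when the tags differ, in which case no protection arose at \autoref{line:ll:second_read}.
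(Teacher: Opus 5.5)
Your proof is correct, and its skeleton matches the paper's. Both locate the \FAI{} on $R[g]$ in \autoref{line:ll:inc_refcount} or \autoref{line:cl:inc_refcount} immediately preceding the protecting read. Both then use the same two cases (the \LL{i} call returning in \autoref{line:ll:matching_tag}, the \CL{i} call ending at \autoref{line:cl:final_return}) to show that this increment is not undone while the protection lasts.

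The real difference is the final accounting step. The paper concludes via \autoref{lem:incremented_nonzero}, which requires that $p$ perform \emph{no} \FAD{} on $R[g]$ after the increment. It justifies this by asserting that $p$'s next decrement of $R[g]$ happens either in a \CL{i} call or in \autoref{line:ll:dec_refcount} of a later \LL{i}. As you point out, that assertion overlooks calls on other objects. Distinct objects may share tags, so while $p$ still protects $g$ for $i$, it can execute either of the following:
\begin{itemize}
\item a \CL{j} with $d[j].\Tag = g$;
\item an \LL{j} or \CL{j} that temporarily increments and then decrements $R[g]$.
\end{itemize}
In either case the hypothesis of \autoref{lem:incremented_nonzero} fails, even though the conclusion of the statement still holds.

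Your per-increment argument avoids this. It uses the injective \FAD{}-to-\FAI{} mapping from the proof of \autoref{lem:more_inc_than_dec}, under which $p$'s contribution to $R[g]$ in fact equals the number of its increments not yet matched. You then only need that the single \FAD{} matched to this particular increment has not occurred, which is exactly what your case analysis establishes. Together with non-negativity of all contributions and \autoref{lem:value_equals_contributions}, this gives $R[g] \geq 1$.

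The trade-off is that you rely on the internal construction of an earlier proof rather than on its statement. In a write-up it would be cleanest either to state that accounting identity as its own lemma, or to strengthen \autoref{lem:incremented_nonzero} to a per-increment version. You can also drop your opening reduction to \autoref{lem:incremented_nonzero}, since your argument replaces it. The paper's route is shorter but, as written, has the gap you identified.
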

\fullonly{
  \begin{proof}
    Suppose that process $p$ protects tag $g \neq 0$ for some object $i$ at point $t^\star$.
    Observe from \autoref{def:protect}(\ref{item:nonzero_protected}) that when a nonzero tag $g$ is protected in \autoref{line:ll:second_read} or~\ref{line:cl:second_read}, the process $p$ protecting it has incremented $R[g]$ (in line \autoref{line:ll:inc_refcount} or~\ref{line:cl:inc_refcount} respectively) and not yet decremented $R[g]$.

    In the case of \autoref{line:ll:second_read}, by \autoref{def:protect}(\ref{item:nonzero_protected}), $L[i]$ matches $p$'s variable pair $(val, tag)$ at point $t^\star$ (so $L[i].\Tag = tag = g$); therefore, $p$ returns in \autoref{line:ll:matching_tag}.
    The next time that $p$ decrements $R[g]$ is thus either within a \CL{i} call, in which case $g$ ceases to be protected for $i$ at that call's invocation, or in \autoref{line:ll:dec_refcount} of a subsequent invocation of \LL{i}.
    However, since only \CL{i} can reset $d[i]$ to $\bot$ (in \autoref{line:cl:clear_di}), if $p$ invokes \LL{i} after point $t^\star$ without invoking \CL{i} in between, then it invokes \CL{i} in \autoref{line:ll:reset_link}.
    This makes $g$ no longer protected for $i$ before $p$ decrements $R[g]$ in \autoref{line:ll:dec_refcount}.

    In the case of \autoref{line:cl:second_read}, $p$ does not decrement $R[g]$ until \autoref{line:cl:final_return}, at which point $p$'s protection of $g$ ends.

    Therefore, $p$ does not decrement $R[g]$ after point $t^\star$ until after $g$ is no longer protected for $i$.
    The lemma then follows from \autoref{lem:incremented_nonzero}.
  \end{proof}
}

\begin{lemma}[Protection Invariant]\label{lem:invariant}
  If tag $g$ is protected for object $i$ at point $t$, then
  \begin{enumerate}[(a)]
    \item $L[i].\Tag$ does not change from a value different from $g$ to $g$ at point $t$, and\label{item:invariant:protection}
    \item if $g$ is \emph{strongly} protected for $i$ at point $t$, then $L[i].\Tag$ also does not change from a value different from $g$ to zero at point $t$.\label{item:invariant:strong_protection}
  \end{enumerate}
\end{lemma}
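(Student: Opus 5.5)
We prove both parts simultaneously by induction over the points of the execution. The inductive hypothesis is that (\ref{item:invariant:protection}) and (\ref{item:invariant:strong_protection}) hold at every point strictly before $t$. Only three kinds of steps can change $L[i].\Tag$: the \CAS{} in \autoref{line:sc:first_cas}, the \CAS{} in \autoref{line:sc:second_cas}, and the \CAS{} in \autoref{line:cl:cas}. So we fix a successful \CAS{} at point $t$ that changes $L[i].\Tag$ from $g_1$ to $g_2 \neq g_1$, and consider each kind in turn.

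\textbf{Part (\ref{item:invariant:protection}), with $g = g_2 \neq 0$.} Only the two \SC{} lines can set a nonzero tag. The \SC{} call's process $p$ read $R[g_2] = 0$ in \autoref{line:sc:check_tag} at some earlier point $t_0$. By \autoref{lem:protected_refcount}, $g_2$ was not protected for $i$ at $t_0$. Hence some process $q$ began protecting $g_2$ for $i$ at a point $t_1 \in (t_0, t)$. By \autoref{def:protect}(\ref{item:nonzero_protected}), at $t_1$ we have $L[i] = (val, g_2)$, so $L[i].\Tag = g_2$ at $t_1$. But $L[i].\Tag = g_1 \neq g_2$ immediately before $t$. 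Therefore $L[i].\Tag$ changed away from $g_2$ at some point in $(t_1, t)$.

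We then use \autoref{lem:protect_during_sc}: $p$ strongly protects a tag $g^\star$ from the invocation of its \SC{} through $t$, and $g^\star = g_1$ if $g_1 \neq 0$. If $g_1 \neq 0$, the tag $g_1$ is protected for $i$ throughout $[t_0, t]$. Yet $L[i].\Tag$ equals $g_2$ at $t_1$ and $g_1$ just before $t$, so it changed from a value other than $g_1$ to $g_1$ at some point in $(t_1, t)$. This contradicts (\ref{item:invariant:protection}) at that earlier point. If $g_1 = 0$, the tag $g^\star$ is strongly protected for $i$ throughout. The field went from $g_2$ to $0$ somewhere in $(t_1, t)$. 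If $g^\star \neq g_2$, this contradicts (\ref{item:invariant:strong_protection}) at that earlier point. If $g^\star = g_2$, then $g_2$ was protected at $t_0$, contradicting $R[g_2] = 0$ via \autoref{lem:protected_refcount}.

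One subtlety is that the change away from $g_2$ might leave the field temporarily equal to $g^\star$. To handle this, I would take the last change into $g_1$ (respectively into $0$) before $t$, rather than an arbitrary one, and argue that it comes from a value other than $g_1$ (respectively $0$).

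\textbf{Part (\ref{item:invariant:protection}), with $g = g_2 = 0$, and part (\ref{item:invariant:strong_protection}).} Here the \CAS{} is \autoref{line:cl:cas} by some process $p$. This $p$ protects $g_1$ from \autoref{line:cl:second_read} until its response. It also read $C[i] = 0$ in \autoref{line:cl:confirm_quiescence} at a point $t_2$, and it read $L[i] = (val, g_1)$ in \autoref{line:cl:second_read}, before $t_2$.

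For (\ref{item:invariant:strong_protection}), suppose $g \neq g_1$ is strongly protected for $i$ at $t$. If this strong protection already held at $t_2$, then \autoref{lem:protected_activity} gives $C[i] \neq 0$ at $t_2$, a contradiction. So the strong protection began in $(t_2, t)$, inside an \LL{i}. At its start, $L[i]$ had tag $g$, or $0$ in case (\ref{item:zero_protected}). Since the tag is $g_1$ both at $t_2$ and immediately before $t$, the field must have returned to $g_1 \notin \{g\}$ while $g_1$ was protected by $p$. If $g = 0$, the field went from $0$ back to $g_1$. Either way this contradicts the inductive (\ref{item:invariant:protection}) for $g_1$. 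The case $g = 0$ in part (\ref{item:invariant:protection}) follows the same way.

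I expect the main obstacle to be the bookkeeping in the first case: choosing the correct ``last transition'' so that the inductive hypothesis applies cleanly, especially when $g_1 = 0$ and the second \CAS{} is used.
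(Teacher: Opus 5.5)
Your proof follows the paper's argument. It uses a first-violation induction and splits on which \CAS{} performs the change. When an \SC{} writes a nonzero tag it relies on \autoref{lem:protected_refcount} and \autoref{lem:protect_during_sc}, and when \autoref{line:cl:cas} writes zero it relies on \autoref{lem:protected_activity}. Your subcase $g_1\neq 0$ and your part~(b) go through. The assertion that the tag equals $g_1$ at $t_2$ is unneeded: you only use that it equals $g$ when the new strong protection starts after $t_2$, and $g_1$ just before $t$.

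The step that fails as written is the subcase $g_1=0$, the \CAS{} in \autoref{line:sc:second_cas}. You say the field ``went from $g_2$ to $0$'' and then split on whether $g^\star=g_2$. But the last change into $0$ before $t$, at some $t^\dagger\in(t_1,t)$, comes from some nonzero $g^\dagger$ that need not be $g_2$. If $g^\dagger=g^\star$, nothing violates (b) there. Your proposed repair (take the last change into $0$ and show it comes from a value other than $0$) is vacuous and does not exclude this case.

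The correct split is on $g^\dagger$:
\begin{itemize}
\item If $g^\dagger\neq g^\star$, then (b) fails at $t^\dagger$, since $g^\star$ is strongly protected there.
\item If $g^\dagger=g^\star$, recall that the field was $g_2\neq g^\star$ at $t_1$; you already have $g^\star\neq g_2$ from $R[g_2]=0$ at $t_0$. So the field changed into $g^\star$ from a different value somewhere in $(t_1,t^\dagger)$. This violates (a) for $g^\star$, which $p$ protects throughout.
\end{itemize}
This is exactly the extra step the paper takes.

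Separately, your claim that ``the case $g=0$ in part (a) follows the same way'' needs one observation. Tag $0$ only ever becomes protected inside an \LL{} call, because a \CL{} call that reads tag $0$ exits at \autoref{line:cl:zero_tag_return}. Hence a protected $0$ is strongly protected, and your part-(b) argument, which depends on \autoref{lem:protected_activity}, applies.
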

\begin{proof}
  We prove this lemma by contradiction.
  Suppose that the invariant is false, and let $t$ be the first point at which it fails.
  Thus, at point $t$, there exists a tag $g$ that is protected for $i$, and $L[i].\Tag$ changes from a tag $g' \neq g$ to $g$, or to zero if $g$ is strongly protected for $i$.

  Suppose first that~(\ref{item:invariant:strong_protection}) fails at point $t$, and therefore that $g$ is strongly protected for $i$ at point $t$, but that $L[i].\Tag$ changes from some tag $g' \neq g$ to zero at that point.
  Only \autoref{line:cl:cas} can change $L[i].\Tag$ to zero, since the only other lines which can change $L[i].\Tag$, namely lines~\ref{line:sc:first_cas} and~\ref{line:sc:second_cas}, choose a tag that is nonzero (from \autoref{line:sc:choose_tag}).
  Thus, there is a process $p$ which executes a successful \CAS{} in \autoref{line:cl:cas} at point $t$, changing $L[i].\Tag$ from $g'$ ($p$'s local value of $tag$) to zero.
  The if-conditions that $p$ checks in lines~\ref{line:cl:second_read} and~\ref{line:cl:confirm_quiescence} immediately before its \CAS{} at point $t$ both return \True, since $p$ executes \autoref{line:cl:cas}.
  Therefore, there exist points $t_L < t_C < t$, such that $L[i].\Tag = g'$ at point $t_L$, and $C[i] = 0$ at point $t_C$.

  From \autoref{def:protect}(\ref{item:nonzero_protected}), $g'$ becomes protected for $i$ at point $t_L$, and it remains protected until $p$ executes \autoref{line:cl:final_return}, which is after point $t$.
  From \autoref{lem:protected_activity}, $g$ is not strongly protected for $i$ at point $t_C$.
  Since by assumption $g$ is strongly protected for $i$ at point $t > t_C$, $g$ becomes strongly protected for $i$ at some point $t^\star \in (t_C, t)$.
  By \autoref{def:protect}, $L[i].\Tag = g$ at point $t^\star$.
  Since $L[i].\Tag = g' \neq g$ just before point $t > t^\star$, there is a point $t^\dagger \in (t^\star, t)$ at which $L[i].\Tag$ changes from some value different from $g'$ (possibly $g$) to $g'$.
  But since $g'$ is protected for $i$ throughout $(t_L,t]$ (recall that $t_L < t_C < t^\star < t^\dagger < t$), point $t^\dagger$ violates~(\ref{item:invariant:protection}), contradicting that $t$ is the first point at which the invariant fails.

  It remains to derive a contradiction for the case where~(\ref{item:invariant:protection}) fails at point $t$, and therefore that $L[i].\Tag$ changes from some tag $g' \neq g$ to $g$ at point $t$, while $g$ is protected for $i$.
  Note that if $g = 0$, then it cannot become protected in \autoref{line:cl:second_read}, as the \CL{} method would exit in \autoref{line:cl:zero_tag_return}.
  Thus, if $g = 0$ is protected then it is protected during an \LL{} call, and therefore it is also strongly protected, so the contradiction follows from that just derived for~(\ref{item:invariant:strong_protection}).
  We can therefore assume without loss of generality that $g \neq 0$.

  From \autoref{lem:protect_during_sc}, there is a process $p$ which executes a successful \CAS{} on $L[i]$ as part of an \SC{i, \cdot} call at point $t$, while strongly protecting some tag $g^\star$ from the call's invocation until at least point $t$.
  Let $t_R < t$ be the last point before $t$ when $p$ reads $R$ in \autoref{line:sc:check_tag}, and therefore reads $R[g] = 0$ and breaks out of the loop.
  Using \autoref{lem:protected_refcount}, it follows that $g$ is not protected for $i$ at point $t_R$, and therefore that $g \neq g^\star$.
  Thus, there exists a point $t^\star \in (t_R, t)$ at which $g$ becomes protected for $i$.
  By \autoref{def:protect}, $L[i].\Tag = g$ at point $t^\star$.
  Since $L[i].\Tag = g' \neq g$ immediately before $t$, it follows that there exists another point $t^\dagger \in (t^\star, t)$ at which $L[i].\Tag$ changes from some value different from $g'$ (possibly $g$) to $g'$.
  (We thus have that $t_R < t^\star < t^\dagger < t$.)

  Suppose that $g' \neq 0$.
  By \autoref{lem:protect_during_sc}, $g' = g^\star$.
  In that case, $L[i].\Tag$ changes from some value different from $g^\star$ to $g^\star$ at point $t^\dagger$.
  Since $g^\star$ is protected for $i$ throughout $[t_R, t] \ni t^\dagger$, this violates~(\ref{item:invariant:protection}) at point $t^\dagger < t$.

  Otherwise, $g' = 0$.
  Thus, at point $t^\dagger$, $L[i].\Tag$ changes from a nonzero tag $g^\dagger$ to zero.
  If $g^\dagger \neq g^\star$, then this violates~(\ref{item:invariant:strong_protection}) at point $t^\dagger < t$, since $g^\star$ is strongly protected for $i$ at point $t^\dagger \in (t_R, t)$.
  If $g^\dagger = g^\star$, then since $L[i].\Tag = g \neq g^\star$ at point $t^\star$, there is a point in $(t^\star, t^\dagger)$ at which $L[i].\Tag$ changes from some value different from $g^\star$ (possibly $g$) to $g^\star$.
  Since $g^\star$ is protected for $i$ throughout $[t_R, t] \supset (t^\star, t^\dagger)$, this violates~(\ref{item:invariant:protection}) before point $t^\dagger < t$.

  Thus, all possible cases contradict that $t$ is the first point which violates the invariant.
\end{proof}

\begin{lemma}\label{lem:sc_changes_tag}
  If an \SC{i, \cdot} operation has a successful \CAS{} at point $t$, then $L[i].\Tag$ changes from some tag $g$ to a new tag $g' \notin \set{g, 0}$ at point $t$.
\end{lemma}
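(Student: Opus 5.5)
We need to prove that a successful \CAS{} in \autoref{line:sc:first_cas} or~\ref{line:sc:second_cas} by process $p$ at point $t$ changes $L[i].\Tag$ from some $g$ to a new tag $g' \notin \set{g,0}$. Here $g'$ is $p$'s local variable $tag'$, the last tag chosen in \autoref{line:sc:choose_tag}. Since that line samples from $\set{1,\dots,2k}$, we get $g' \neq 0$ immediately, and it remains to show $g' \neq g$.

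The plan is to let $t_R < t$ be the point at which $p$ last reads $R[g'] = 0$ in \autoref{line:sc:check_tag}. By \autoref{lem:protected_refcount}, $g'$ is not protected for any object, in particular not for $i$, at $t_R$. By \autoref{lem:protect_during_sc}, $p$ strongly protects some tag $g^\star$ for $i$ from its invocation of \SC{i,\cdot} through $t$, with $g^\star = g$ if $g \neq 0$. We then split into the cases $g = 0$ and $g \neq 0$.

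If $g = 0$, then $g' \neq 0 = g$ and there is nothing to show. If $g \neq 0$, then $g = g^\star$ is protected throughout $[t_R, t]$, since the invocation precedes $t_R$. Because $g'$ is not protected at $t_R$, we have $g^\star \neq g'$, since otherwise $R[g'] \neq 0$ at $t_R$ by \autoref{lem:protected_refcount}. Hence $g = g^\star \neq g'$.

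This case analysis is the main subtlety, though it is light. In the case $g \neq 0$, \autoref{lem:protect_during_sc} must be applied to the successful \CAS{} itself. Its hypothesis requires $g_2 = g' \neq 0$, which we have already established, and it then yields $g^\star = g_1 = g$. We must also note that the strong protection began before $t_R$, because it holds from the call's invocation and the loop comes after that. With both facts in place, \autoref{lem:protected_refcount} applies at $t_R$. The protection invariant (\autoref{lem:invariant}) is not needed for this step.
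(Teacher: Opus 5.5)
Your proof is correct and follows essentially the same route as the paper. You get $g' \neq 0$ from the sampling range and dispose of $g = 0$ trivially; for $g \neq 0$, you combine \autoref{lem:protect_during_sc} (so $g$ is protected when $p$ exits the loop) with \autoref{lem:protected_refcount} applied to the read $R[g'] = 0$ to conclude $g \neq g'$. The only cosmetic difference is that you read off $g^\star = g$ directly from that lemma's conclusion, whereas the paper notes that $g \neq 0$ forces the successful \CAS{} to be the one in \autoref{line:sc:first_cas}, with $g$ equal to $p$'s stored tag.
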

\begin{proof}
  The fact that the new tag $g' \neq 0$ follows from the fact that \autoref{line:sc:choose_tag} chooses a nonzero value for the new tag.
  Thus, if the old tag $g = 0$, then the lemma follows immediately.
  Suppose instead that $g \neq 0$, and therefore that the calling process $p$ executes the successful \CAS{} in \autoref{line:sc:first_cas}, and $g = d[i].\Tag$.
  To show the lemma, it suffices to prove that $g \neq g'$.

  From \autoref{lem:protect_during_sc}, since $g \neq 0$, $g$ is protected when $p$ breaks out of its loop in \autoref{line:sc:check_tag}, upon reading $R[g'] = 0$.
  By \autoref{lem:protected_refcount}, $g'$ is not protected when $p$ executes \autoref{line:sc:check_tag}, so $g \neq g'$.
\end{proof}

\begin{definition}[Function $lin$]\label{def:lin}
  We define a function $lin(c)$ from method calls $c$ in an execution to points in time, and later (in \autoref{thm:linearizable}) prove that they are correct linearization points.

  If $c$ is an \LL{} call, then $lin(c)$ depends on which line the method responds from.
  If it responds in \autoref{line:ll:first_zero_tag} or~\ref{line:ll:final_return}, then $lin(c)$ occurs at \autoref{line:ll:first_read}.
  If it responds in \autoref{line:ll:matching_tag} or~\ref{line:ll:second_zero_tag}, then $lin(c)$ occurs at \autoref{line:ll:second_read}.

  If $c$ is an \SC{} call, then $lin(c)$ occurs at its invocation if it returns in \autoref{line:sc:early_return}, and otherwise $lin(c)$ is the point of the last \CAS{} operation.
  That is, if the \CAS{} in \autoref{line:sc:first_cas} succeeds, then $lin(c)$ occurs at \autoref{line:sc:first_cas}, and if the first \CAS{} fails then $lin(c)$ occurs at \autoref{line:sc:second_cas}.

  If $c$ is a \CL{} call, then $lin(c)$ occurs at its invocation.

  If $c$ is a \Read{} call, then $lin(c)$ occurs at \autoref{line:read} (its only step).

  Finally, if $c$ is a \VL{} call, then $lin(c)$ occurs at its invocation if it returns in \autoref{line:vl:early_return}, and otherwise $lin(c)$ occurs at \autoref{line:vl:read}.

  For an execution with pending invocations, $lin(c)$ is undefined for pending \LL{} calls, and for other calls for which the line corresponding to $lin(c)$ has not yet been executed.
\end{definition}
\fullonly{Note that since $lin(c)$ is defined retroactively for \LL{} calls, it cannot be a strong linearization point~\cite{GHW2011a}.}

\begin{lemma}\label{lem:successful}
  Suppose that $c$ is either an \SC{i, u} or \VL{i} call by process $p$.
  The call returns \True, and changes $L[i].\val$ to $u$ at point $lin(c)$, in the case of \SC{i, u}, if and only if
  \begin{enumerate}[(a)]
    \item some tag becomes strongly protected for $i$ by $p$ at some point $t < lin(c)$, and remains strongly protected by $p$ throughout $(t, lin(c)]$, and\label{item:successful:strong_protection}
    \item no \SC{i, \cdot} call has a successful \CAS{} in \autoref{line:sc:first_cas} or~\ref{line:sc:second_cas} in the interval $(t, lin(c))$.\label{item:successful:no_others}
  \end{enumerate}
\end{lemma}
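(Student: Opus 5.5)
Both directions reduce to the code paths of \SC{} and \VL{} together with \autoref{lem:invariant}, \autoref{lem:sc_changes_tag}, and \autoref{lem:idle_protection}. The natural choice of $t$ is the point at which $p$'s most recent \LL{i} strongly protected its tag $g^\star = d[i].\Tag$. Since $p$ has no pending call on $i$ at the invocation of $c$, \autoref{lem:idle_protection} gives two facts: $d[i] \neq \bot$ if and only if $p$ strongly protects $g^\star = d[i].\Tag$, and $d[i]$ equals the value of $L[i]$ at $t$. Any tag strongly protected by $p$ with protection lasting until $lin(c)$ must be this one, since $p$ strongly protects at most one tag per object and protection ends only at an invocation of \CL{i}. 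If $d[i] = \bot$, the call returns \False{} at invocation ($=lin(c)$), and no tag is strongly protected by $p$, so (\ref{item:successful:strong_protection}) fails and the equivalence holds. From now on assume $d[i] = (v^\star, g^\star)$. Then (\ref{item:successful:strong_protection}) holds automatically, because no internal \CL{i} is invoked before $lin(c)$ in either method. So it suffices to show that the call succeeds if and only if (\ref{item:successful:no_others}) holds.

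\emph{Key claim.} Throughout $(t, lin(c))$, the pair $L[i]$ lies in $\set{(v^\star, g^\star), (v^\star, 0)}$ if and only if no successful \SC{i,\cdot} \CAS{} occurs in $(t, lin(c))$.

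For the ``if'' direction, assume no such \CAS{} occurs. The only other writes to $L[i]$ are \autoref{line:cl:cas}, which preserves the value and sets the tag to zero. By \autoref{lem:invariant}(\ref{item:invariant:strong_protection}), which applies because $g^\star$ is strongly protected, such a write can only move from $g^\star$ to $0$ (or is trivial). Hence $L[i]$ starts at $(v^\star,g^\star)$ and may at most drop once to $(v^\star,0)$. For the zero case, the initial tag zero requires a little care: if $g^\star = 0$, the pair stays $(v^\star,0)$.

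For the ``only if'' direction, suppose some successful \SC{} \CAS{} occurs at $s \in (t, lin(c))$. By \autoref{lem:sc_changes_tag} it moves the tag to some $g' \notin \set{g_{\text{old}}, 0}$. If $g' \neq g^\star$, the tag afterwards can never return to $g^\star$ (by \autoref{lem:invariant}(\ref{item:invariant:protection})) nor to $0$ (by \autoref{lem:invariant}(\ref{item:invariant:strong_protection})), up to $lin(c)$. If instead $g' = g^\star$, then $g_{\text{old}} \neq g^\star$, so the change at $s$ is itself a move to $g^\star$ while $g^\star$ is protected; this contradicts \autoref{lem:invariant}(\ref{item:invariant:protection}) unless $g^\star = 0$, and that is impossible since $g' \neq 0$. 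So after $s$ the tag lies outside $\set{g^\star,0}$ forever before $lin(c)$.

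\emph{Reading off the outcomes.} For \VL{}, success means the tag read at $lin(c)$ lies in $\set{g^\star,0}$, so the key claim (extended to the point $lin(c)$ itself, where a \Read{} changes nothing) gives the equivalence. For \SC{}, the first \CAS{} expects $(v^\star,g^\star)$ and the second expects $(v^\star,0)$. Under (\ref{item:successful:no_others}), $L[i]$ is in one of these two states. If the first \CAS{} fails, the state must already be $(v^\star,0)$ at the second \CAS{}: the only transition available is $g^\star\to 0$, which cannot be undone without a successful \SC{} \CAS{}, and no such \CAS{} can have occurred between the two \CAS{} attempts of $p$ before $lin(c)$. So the last \CAS{} succeeds and writes $u$. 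Conversely, a successful \SC{} \CAS{} in the interval puts $L[i]$ outside both expected pairs, so both \CAS{} attempts fail and the value is not set to $u$.

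\emph{Main obstacle.} The main obstacle is the ``only if'' direction of the key claim, where the tag leaves $\set{g^\star,0}$ and must never re-enter. Here I would argue by taking the first re-entry point and applying \autoref{lem:invariant} to it, in the same style as the proof of \autoref{lem:invariant} itself, while handling the boundary cases $g^\star=0$ and the success of the first \CAS{} separately.
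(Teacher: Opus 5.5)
Your proposal is correct and is essentially the paper's proof, reorganized. The ``only if'' half of your key claim is the paper's forward direction: any successful \SC{i,\cdot} \CAS{} in $(t,lin(c))$ moves $L[i].\Tag$ outside $\set{g^\star,0}$ for good, by \autoref{lem:sc_changes_tag} and \autoref{lem:invariant}. The ``if'' half is its reverse direction: only \autoref{line:cl:cas} can act, so $L[i]$ goes at most from $(v^\star,g^\star)$ to $(v^\star,0)$. In both halves, \autoref{lem:idle_protection} supplies $t$, $g^\star$ and $d[i]$.

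The differences are cosmetic. You avoid choosing the first offending \CAS{}, and you use \autoref{lem:invariant}(\ref{item:invariant:strong_protection}) where the \CAS{} semantics alone suffice. The ``main obstacle'' you flag is already dispatched by your direct appeal to \autoref{lem:invariant}.
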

\begin{proof}
  We first prove the forward direction, so we first assume that $c$ returns \True.
  Since $c$ returns \True, it responds in either \autoref{line:sc:return} or~\ref{line:vl:return_true}.
  Since in particular it does not return in \autoref{line:sc:early_return} or~\ref{line:vl:early_return}, it means that $d[i] \neq \bot$ when $p$ invokes $c$.
  By \autoref{lem:idle_protection}(\ref{item:di_iff_protected}), $d[i].\Tag$ is strongly protected at $c$'s invocation, starting from some previous point $t < lin(c)$, and by \autoref{def:protect} it remains protected until after $lin(c)$.
  Thus,~(\ref{item:successful:strong_protection}) is satisfied for some point $t < lin(c)$.

  Let $g^\star$ be the tag that $p$ strongly protects at point $t$, i.e., its value of $d[i].\Tag$ in call $c$.
  Suppose for the purpose of proving a contradiction that a successful \CAS{} from an \SC{i, \cdot} call takes place in $(t, lin(c))$, and let $t'$ be the point of the first such \CAS{}.
  Note that the only other \CAS{} which can take place on $L[i]$ happens in \autoref{line:cl:cas}, which also sets $L[i].\Tag$ to zero.
  Since, by \autoref{def:protect}, $L[i].\Tag = g^\star$ at point $t$, it follows that $L[i].\Tag$ is either $g^\star$ or zero immediately before point $t'$.
  By \autoref{lem:sc_changes_tag}, $L[i].\Tag$ changes to a different and nonzero value at point $t'$, and by \autoref{lem:invariant}, $L[i].\Tag$ does not change to $g^\star$ at point $t'$ since $g^\star$ is protected at that point.
  Thus, $L[i].\Tag$ changes to some nonzero tag distinct from $g^\star$ at point $t'$.
  Applying \autoref{lem:invariant} to each successful \CAS{} on $L[i]$ in $(t', lin(c))$ (if any), we obtain that $L[i].\Tag$ can only change to other nonzero tags distinct from $g^\star$ throughout the interval.
  But, since we assume that $c$ returns \True, and $lin(c)$ occurs at \autoref{line:sc:first_cas},~\ref{line:sc:second_cas}, or~\ref{line:vl:read}, it follows from the code that $L[i].\Tag \in \set{g^\star, 0}$ immediately before $lin(c)$, which is a contradiction.
  Thus,~(\ref{item:successful:no_others}) follows, finishing the proof for the forward direction.

  It remains to show the reverse direction, so we assume that~(\ref{item:successful:strong_protection}) and~(\ref{item:successful:no_others}) both hold for $c$.
  Like the forward case, let $g^\star$ be the tag that $p$ strongly protects throughout $(t, lin(c)]$, and we obtain from \autoref{def:protect} that $L[i].\Tag = g^\star$ at point $t$.
  From \autoref{lem:idle_protection}(\ref{item:di_iff_protected}), $p$'s value of $d[i].\Tag$ is $g^\star$ at $c$'s invocation, and in particular $p$ does not return in \autoref{line:sc:early_return} or~\ref{line:vl:early_return}.
  From~(\ref{item:successful:no_others}), no successful \CAS{} from an \SC{i, \cdot} call takes place throughout $(t, lin(c))$.
  The only other line that can change $L[i]$ is \autoref{line:cl:cas}, which changes $L[i].\Tag$ to zero while keeping $L[i].\val$ unchanged.
  It follows that $L[i].\Tag \in \set{g^\star, 0}$ immediately before point $lin(c)$, and that $L[i].\val$ is unchanged throughout $(t, lin(c))$.
  From \autoref{lem:idle_protection}(\ref{item:di_Li_match_when_protected}), $p$'s value of $d[i].\val$ equals $L[i].\val$ from $c$'s invocation until just before $lin(c)$, since $c$'s invocation is in the interval $(t, lin(c))$.

  If $c$ is a \VL{i} call, then the call returns \True because $g^\star$ is $p$'s value of $d[i].\Tag$.
  If $c$ is an \SC{i, \cdot} call and its \CAS{} in \autoref{line:sc:first_cas} succeeds, then the call returns \True.
  Otherwise, the \CAS{} in \autoref{line:sc:first_cas} fails, so $p$ performs a second \CAS{} in \autoref{line:sc:second_cas}, and $lin(c)$ is the point of that second \CAS{}.
  Since the only possible transition of $L[i].\Tag$ in $(t, lin(c))$ is from $g^\star$ to zero, it follows that $L[i].\Tag = 0$ when $p$ executes \autoref{line:sc:first_cas}, and therefore remains at zero immediately before $lin(c)$.
  Since $L[i].\val = d[i].\val$ just before $lin(c)$, the second \CAS{} succeeds.
\end{proof}

\begin{lemma}\label{lem:ll_strong_protection}
  Suppose that $c$ is an \LL{i} call by process $p$, such that $lin(c) = t$ and that the call responds at point $t'$.
  If some tag becomes strongly protected for object $i$ by $p$ during the call, then it becomes strongly protected at point $t$.
  Otherwise, some process $q \neq p$ has a \SC{i, \cdot} call $c'$ with a successful \CAS{} at point $lin(c')$, such that $t < lin(c') < t'$.
\end{lemma}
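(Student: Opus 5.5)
We split according to the line from which the \LL{i} call $c$ responds, using \autoref{def:lin} and \autoref{def:protect}. Strong protection inside an \LL{i} call can only begin at a read of $L[i]$ in \autoref{line:ll:first_read} or~\ref{line:ll:second_read}. Note that the internal \CL{i} in \autoref{line:ll:reset_link} precedes both reads, so it does not interfere.

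\textbf{Responses in \autoref{line:ll:first_zero_tag}, \autoref{line:ll:matching_tag}, and \autoref{line:ll:second_zero_tag}.}
\begin{itemize}
\item If $c$ responds in \autoref{line:ll:first_zero_tag}, then $p$ reads tag zero at \autoref{line:ll:first_read}. By \autoref{def:protect}(\ref{item:zero_protected}), tag zero becomes strongly protected exactly there, and that point is $lin(c)$.
\item If $c$ responds in \autoref{line:ll:matching_tag}, the second read returns $(val,tag)$. If $tag\neq 0$, \autoref{def:protect}(\ref{item:nonzero_protected}) makes $tag$ strongly protected at \autoref{line:ll:second_read}, which is $lin(c)$. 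The first read does not protect anything, since its tag was nonzero (otherwise $c$ would have returned in \autoref{line:ll:first_zero_tag}).
\item If $c$ responds in \autoref{line:ll:second_zero_tag}, the first read saw a nonzero tag and so protected nothing. The second read saw tag zero, so tag zero becomes strongly protected at \autoref{line:ll:second_read}, which is $lin(c)$.
\end{itemize}
In all three cases, the only protection that starts during the call starts at $t=lin(c)$.

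\textbf{Response in \autoref{line:ll:final_return}.} Here $lin(c)$ is at \autoref{line:ll:first_read}. Neither read protects a tag: the first read saw $tag\neq 0$, and the second read saw $(val',tag')\neq(val,tag)$ with $tag'\neq 0$. The \CL{i} call in \autoref{line:ll:clear} does not create strong protection. So no tag becomes strongly protected during $c$, and we must exhibit the required successful \SC{i,\cdot} by some $q\neq p$.

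$L[i]$ changed between $t$ (the first read) and the second read. The only steps that modify $L[i]$ are successful \CAS{}s in lines~\ref{line:sc:first_cas}, \ref{line:sc:second_cas}, and~\ref{line:cl:cas}. The step in \autoref{line:cl:cas} sets the tag to zero and leaves the value unchanged. If only such steps occurred in this interval, then $L[i]$ at the second read would have tag zero, contradicting $tag'\neq 0$. Hence some successful \CAS{} in \autoref{line:sc:first_cas} or~\ref{line:sc:second_cas} occurs in $(t,t')$, in fact before the second read.

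It remains to check three things:
\begin{itemize}
\item This \CAS{} is $lin(c')$ for its \SC{} call $c'$. A successful \CAS{} in \autoref{line:sc:first_cas} is the last \CAS{} of the call, and one in \autoref{line:sc:second_cas} is always the last. So \autoref{def:lin} gives $lin(c')$ equal to that point.
\item The executing process is $q\neq p$, since $p$ is busy executing $c$ throughout $(t,t')$.
\item The strict inequalities $t<lin(c')<t'$ hold, because the \CAS{} is a distinct step strictly between $p$'s first read and its response.
\end{itemize}

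The main obstacle is showing that the change seen by $p$ is not caused solely by \CL{} resets to zero. This is handled by the observation that $tag'\neq 0$ while \autoref{line:cl:cas} only ever writes tag zero. Since $tag\neq 0$, the final state cannot have been reached without a nonzero-writing \CAS{}.
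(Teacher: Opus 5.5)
Your proof is correct and has the same overall structure as the paper's. You split on the response line, observe that only the response in \autoref{line:ll:final_return} fails to create strong protection, and in that case exhibit a successful \SC{i, \cdot} strictly between $p$'s two reads.

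The difference is in how you find that \SC{}. The paper asserts $tag' \notin \set{tag, 0}$. It then takes the first point at which $L[i].\Tag$ moves outside $\set{tag, 0}$, which only a \CAS{} in \autoref{line:sc:first_cas} or~\ref{line:sc:second_cas} can cause. However, reaching \autoref{line:ll:final_return} only guarantees $(val', tag') \neq (val, tag)$ and $tag' \neq 0$, and $tag' = tag$ can actually occur. Here is one way:
\begin{enumerate}
\item A \CL{i} reads $C[i] = 0$ in \autoref{line:cl:confirm_quiescence} just before $p$'s \FAI{} in \autoref{line:ll:inc_activity}.
\item After $p$'s first read, that \CL{i} resets $L[i]$ to $(val, 0)$ and completes.
\item Another process's \LL{i} reads tag zero, and its \SC{i, u} picks $tag$ again, which is allowed because $p$ has not yet executed \autoref{line:ll:inc_refcount}.
\end{enumerate}
The tag then goes $tag \to 0 \to tag$ and never leaves $\set{tag, 0}$, so the paper's intermediate claim fails, even though the lemma itself still holds.

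Your argument avoids this. $L[i]$ changed; every \CAS{} in \autoref{line:cl:cas} writes tag zero and preserves the value; yet $tag' \neq 0$. Hence not every modification before the second read can come from \autoref{line:cl:cas}. This never needs $tag' \neq tag$, so it covers the case above and is the more robust way to carry out this step.
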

\fullonly{
  \begin{proof}
    Combining Definitions~\ref{def:protect} and~\ref{def:lin}, an \LL{} call $c$ that responds in \autoref{line:ll:first_zero_tag},~\ref{line:ll:matching_tag}, or~\ref{line:ll:second_zero_tag} causes a tag to become strongly protected at $lin(c)$, and an \LL{} call that responds in \autoref{line:ll:final_return} does not strongly protect any tag, proving the first claim.
    If $c$ responds in \autoref{line:ll:final_return}, then $p$'s local variable $tag \neq 0$ and $tag' \notin \set{tag, 0}$.
    In particular, at point $t^\star \in (t, t')$, $p$ executes \autoref{line:ll:second_read}, in which it reads $tag'$ from $L[i].\Tag$.
    Thus, there is some point $t^\dagger \in (t, t^\star)$ at which $L[i].\Tag$ first changes to a value not in $\set{tag, 0}$.
    The only lines which can change $L[i].\Tag$ to a nonzero value are a successful \CAS{} in lines~\ref{line:sc:first_cas} and~\ref{line:sc:second_cas}, and by \autoref{def:lin} such a successful \CAS{} operation is point $lin(c')$ of an \SC{i, \cdot} call $c'$.
    Hence, $t^\dagger = lin(c')$ for some \SC{i, \cdot} call $c'$ with a successful \CAS{} operation.
    Moreover, since $c'$ and $c$ overlap, it follows that $c'$'s calling process $q \neq p$.
  \end{proof}
}
\confonly{This lemma follows from the combination of Definitions~\ref{def:protect} and~\ref{def:lin}; note that the case of no tag becoming strongly protected corresponds to the case of returning in \autoref{line:ll:final_return}.}

\begin{theorem}[Linearizability]\label{thm:linearizable}
  The LL/SC implementation is linearizable according to the sequential specification, with $lin(c)$ as the linearization point for each call $c$.
\end{theorem}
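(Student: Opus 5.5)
The plan is to order all completed calls (and pending calls whose $lin(c)$ is defined) by their points $lin(c)$. Then we check two things: that each $lin(c)$ lies within its call's execution interval, and that the resulting sequential execution matches the sequential specification of $m$ LL/SC objects. The argument is per object $i$, since operations on different objects commute in the specification.

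\textbf{Step 1: $lin(c)$ lies within the call.} This is immediate from Definition~\ref{def:lin}, because every chosen point is a step of the call itself or its invocation. For an \LL{} call we also need the value returned: it equals the value read at $lin(c)$ (the first read in lines~\ref{line:ll:first_zero_tag} and~\ref{line:ll:final_return}, the second read in lines~\ref{line:ll:matching_tag} and~\ref{line:ll:second_zero_tag}), so it is the value of $L[i].\val$ at $lin(c)$.

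\textbf{Step 2: abstract value tracks $L[i].\val$.} I will show that the abstract value of object $i$ in the linearization prefix up to any point equals $L[i].\val$ at that point. Only two kinds of steps change $L[i]$. A successful \CAS{} in line~\ref{line:sc:first_cas} or~\ref{line:sc:second_cas} is exactly $lin(c)$ of a successful \SC{i,u}, and it writes $u$. Line~\ref{line:cl:cas} preserves $.\val$. With this, \Read{} and \LL{} return the correct value. By Lemma~\ref{lem:successful}, an \SC{} changes $L[i].\val$ iff it returns \True, so failing \SC{}s leave the value unchanged.

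\textbf{Step 3: link correctness.} This is the main obstacle. Lemma~\ref{lem:successful} says that an \SC{} or \VL{} call $c$ by $p$ returns \True iff $p$ strongly protects some tag from a point $t$ through $lin(c)$, and no successful \SC{i,\cdot} \CAS{} occurs in $(t,lin(c))$. I need to translate this into: $\ell_p=1$ in the linearization just before $lin(c)$, i.e., $p$ has a linearized \LL{i} with no later linearized successful \SC{i} and no later own \CL{i}, failing \SC{i}, or failing \VL{i}.

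I would first prove an auxiliary claim, by inspecting when strong protection starts and ends (Definition~\ref{def:protect}, Lemma~\ref{lem:idle_protection}). The claim is that at any point outside $p$'s pending calls on $i$, $p$ strongly protects a tag for $i$ iff its last linearized operation among $\{\LL{i},\CL{i},\SC{i},\VL{i}\}$ is an \LL{i} that did not return in line~\ref{line:ll:final_return}, or a \VL{i} returning \True.

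Then split on the last \LL{i} $c_0$ of $p$ before $c$. If $c_0$ strongly protects a tag, Lemma~\ref{lem:ll_strong_protection} makes the protection start exactly at $lin(c_0)$. Condition (b) of Lemma~\ref{lem:successful} then coincides with ``no successful \SC{i} linearized after $lin(c_0)$,'' which is exactly $\ell_p=1$. Two subtleties need care here. First, \CL{} linearizes at invocation, and its internal tag reset does not clear links. Second, the internal \CL{i} call in line~\ref{line:ll:reset_link} must be accounted for as already linearized; it occurs before $lin(c_0)$, so this is harmless.

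If $c_0$ returned in line~\ref{line:ll:final_return}, then $d[i]=\bot$ and $c$ returns \False. Here Lemma~\ref{lem:ll_strong_protection} provides a successful \SC{i} by another process linearized in $(lin(c_0),\text{response of }c_0)$. That \SC{} resets $\ell_p$ to $0$ in the linearization, so \False is correct.

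Finally, \CL{} has no response and updates $\ell_p$ consistently. A \VL{} returning \False leaves $\ell_p=0$ and sets $d[i]=\bot$, which keeps the auxiliary claim inductively true. This completes the check against the specification.
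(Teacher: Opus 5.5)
Your proposal is correct and takes essentially the same route as the paper. It sorts calls by $lin$, tracks the abstract value through $L[i].\val$, and reduces the context-bit check for \SC{}/\VL{} to Lemmas~\ref{lem:successful} and~\ref{lem:ll_strong_protection}, with your auxiliary claim packaging the paper's line-by-line exclusion of internal \CL{i} calls between $c_0$ and $c$. One caveat: your characterization of $\ell_p=1$, which excludes failing \SC{i}/\VL{i} calls by $p$, agrees with the specification only because earlier calls are already known to respond correctly, so the argument must be run as an induction over calls in $lin$ order, as the paper does.
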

\fullonly{
  \begin{proof}
    Consider an execution of LL/SC objects using this implementation.
    For each pending call $c$ in the execution such that $lin(c)$ is undefined according to \autoref{def:lin}, we remove $c$ from the linearization.
    For each \CL{} call invoked internally from an \LL{}, \SC{} or \VL{} call, we also remove it from the linearization.
    We construct the linearization from the remaining calls, sorting them by $lin(c)$.

    We prove that this linearization is correct, by induction over calls $c$ sorted by $lin(c)$.
    Specifically, we prove that each call $c$ returns a valid response and correctly affects the interpreted abstract state of the system.
    We inductively show that the interpreted value of LL/SC object $i$ is $L[i].\val$ before and after each linearization point.
    This is true initially since $L[i].\val$ is initialized to $u_i$, the application-dependent initial value for LL/SC object $i$.
    Recall from the sequential specification (\autoref{sec:seq_spec}) that the context bit $\ell_p$ of object $i$ equals $1$ (i.e., $p$ has an open link to $i$) immediately before a given method call $c$ of a sequential execution if and only if
    \begin{itemize}
      \item an \LL{i} operation $c'$ is called by process $p$ before call $c$ in the sequential execution,
      \item no successful \SC{i, \cdot} call (i.e., a call returning \True) by any process takes place in between $c'$ and $c$, and
      \item no direct \CL{i} call by $p$ takes place in between $c'$ and $c$.
    \end{itemize}
    It also follows from the sequential specification that the context bit only directly affects the return value of \SC{i, \cdot} and \VL{i} calls.
    As such, to prove correct behaviour of context bits, it suffices to show that \SC{i, \cdot} and \VL{i} calls return $\True$ if and only if the relevant context bit should be $1$ immediately before the respective call in the linearization.

    If $c$ is a \Read{i} or \LL{i} call, then its correctness follows from the fact that the call returns $L[i].\val$ at $lin(c)$.

    If an \SC{i, u} call $c$ returns \True, then $L[i].\val = u$ immediately after $lin(c)$, since by \autoref{def:lin} $c$ has a successful \CAS{} changing $L[i].\val$ to $u$ at point $lin(c)$.
    It also follows from the code that this is the only situation in which $L[i].\val$ can change.
    Therefore, if $c$ is an \SC{i, u} or \VL{i} call by process $p$, we only need to show that it returns \True if and only if the interpreted context bit $\ell_p = 1$, based on the three conditions listed earlier.

    Suppose that an \SC{i, u} or \VL{i} call $c$ by process $p$ does return \True, and we will show that $\ell_p = 1$ before $c$ according to the three conditions.
    By \autoref{lem:successful}(\ref{item:successful:strong_protection}), there is a point $t < lin(c)$ at which some tag $g$ becomes strongly protected for $i$ by $c$'s calling process $p$, and it remains strongly protected until at least $lin(c)$.
    By \autoref{def:protect}, only an \LL{i} call $c'$ by $p$ can cause $p$ to strongly protect $g$, and by \autoref{lem:ll_strong_protection}, $g$ becomes strongly protected at point $lin(c')$.
    By \autoref{lem:successful}(\ref{item:successful:no_others}), no \SC{i, \cdot} call has a successful \CAS{} in the interval $(lin(c'), lin(c))$.
    Since an \SC{i, \cdot} call $s$ with a successful \CAS{} returns \True and has $lin(s)$ at the point of that \CAS{}, it follows that there is no such call $s$ with $lin(c') < lin(s) < lin(c)$.
    Since a \CL{i} call by $p$ ends its strong protection of any tag for $i$, and $g$ remains strongly protected from $lin(c')$ until at least $lin(c)$, it follows that $p$ does not call \CL{i} in between $c'$ and $c$.
    Therefore, the three conditions are satisfied and the interpreted context bit $\ell_p = 1$ immediately before $lin(c)$, as required.

    Now suppose that the interpreted context bit $\ell_p = 1$ immediately before call $lin(c)$ for an \SC{i, u} or \VL{i} call $c$ by process $p$.
    Then, by the three conditions, $p$ has an \LL{i} call $c'$ with $lin(c') < lin(c)$, such that no successful \SC{i, \cdot} call $s$ has $lin(s) \in (lin(c'), lin(c))$, and $p$ does not directly call \CL{i} in between $c'$ and $c$.
    In particular, there is no successful \CAS{} on $L[i]$ directly from an \SC{i, \cdot} call in the interval $(lin(c'), lin(c))$.

    Suppose without loss of generality that $c'$ is the last \LL{i} call by $p$ before $c$ (if the three conditions hold for an earlier \LL{i} call, then they also hold for $c'$).
    From \autoref{lem:ll_strong_protection}, some tag $g$ becomes strongly protected by $p$ at point $lin(c')$ (since $p$ responds from $c'$ before it invokes $c$).
    In order to use \autoref{lem:successful}, we need to show that this strong protection lasts until at least $lin(c)$, and in particular that $p$ does not indirectly call \CL{i} in between $lin(c')$ and $lin(c)$.
    (Recall that indirect calls of \CL{} are removed from the linearization).
    The possible indirect calls of \CL{i} are in lines~\ref{line:ll:reset_link},~\ref{line:ll:clear},~\ref{line:sc:clear}, and~\ref{line:vl:clear}, so we will show that $p$ does not execute any of these lines in between $lin(c')$ and $lin(c)$ by deriving contradictions for each case.

    For \autoref{line:ll:reset_link}, since the line occurs before any possible line mapped to by $lin$ for \LL{}, it would have to be part of an \LL{i} call after $c'$, which contradicts that $c'$ is the last \LL{i} call by $p$ before $c$.
    For \autoref{line:ll:clear}, by the above arguments it should be part of call $c'$, but by \autoref{def:protect} and inspection of the code, no tag becomes strongly protected during $c'$, which contradicts that some tag does become strongly protected at point $lin(c')$.
    Line~\ref{line:sc:clear} would be part of an \SC{i, \cdot} call, so we will show that $p$ does not make any such call in between $c'$ and $c$.
    If such a call were to return \True, this contradicts the assumption there is no \SC{i, \cdot} call which returns \True in the linearization in between calls $c'$ and $c$.
    If such a call were to return \False, then by the inductive hypothesis there is another \SC{i, \cdot} call $s$ which returns \True with $lin(c') < lin(s) < lin(c)$, and this also contradicts our assumption.
    This last argument also applies if $p$ executes \autoref{line:vl:clear} in between $lin(c')$ and $lin(c)$, since that line is only executed in a \VL{i} call which returns \False.

    We therefore have that a tag becomes strongly protected by $p$ for $i$ at point $lin(c')$ and remains so until at least point $lin(c)$, and that no \CAS{} from an \SC{i, \cdot} call succeeds in between $lin(c')$ and $lin(c)$.
    Thus, from \autoref{lem:successful}, $c$ does return \True.

    Since \CL{} only affects $\ell_p$, which itself only becomes visible via \SC{} and \VL{} calls which we have already covered, the proof is complete.
  \end{proof}
}
\confonly{
  The linearizability proof from the full version~\cite{BHW2026a-full} is technical and too long to fit given the space constraints.
  The key ideas follow from Lemmas~\ref{lem:successful} and~\ref{lem:ll_strong_protection}, and the proof shows how ordering operations by $lin$ satisfies the sequential specification from \autoref{sec:seq_spec}.
}
}

\section{Implemented LL/SC in Existing History-Independent Algorithms}\label{sec:llsc_in_ABFOS}
We here briefly explain how to use our implemented LL/SC in the previously published history-independent universal construction~\cite{ABFOS2024a} and history-independent hash table~\cite{ABFOS2025a}.

In the state QHI universal construction's pseudocode (Algorithm 1 of~\cite{ABFOS2024a}), all blue lines are only required to achieve a wait-free universal construction from lock-free LL/SC\@.
Since our LL/SC implementation is randomized wait-free, the blue lines can be safely omitted, making the universal construction randomized wait-free.
Their algorithm also uses unconditional \Store{} operations, but they can be replaced by \LL{} followed by \SC{}, since this particular \SC{} is guaranteed to succeed.
The \RL{} calls in red lines become \CL{} calls with our implementation, except that the `if' condition in line 22 would need to be removed, i.e., that line becomes an unconditional \CL{announce[j]} call.

This last difference comes from the fact that their LL/SC implementation only needs to store context bits (i.e., open links) in addition to the LL/SC value, whereas ours also uses FAID counters.
When $a \neq \bot$ in line 22, the calling process $p_i$ knows that its open link to $announce[j]$ will be cleared by process $p_j$'s final \Store{} before the system next enters a quiescent configuration.
Since $p_j$'s \Store{} removes all traces of $p_i$'s open link, an \RL{} call is unnecessary under their lock-free LL/SC\@.
Our implementation has a slightly stronger requirement that processes must clear their own outstanding \LL{} operations, even if their open links are cleared, in order to ensure that elements of $C$, $R$, and $\Tag$ fields of $L$ reach zero.

For the history-independent hash table~\cite{ABFOS2025a}, \CL{i} can be added in the \FuncSty{insert()}, \FuncSty{lookup()}, \FuncSty{delete()}, and \FuncSty{help\_op()} methods immediately before each line which increments or decrements the variable $i$.
In each of these cases, the outstanding $\LL{i}$ is no longer needed once $i$ is changed.
Immediately before each response point (including responses from internal \FuncSty{help\_op()} calls), one can verify that the responding process has outstanding \LL{} operations for at most \LL{i-1}, \LL{i}, and \LL{i+1}, so \CL{} operations can be added for those three objects.
As discussed in~\autoref{sec:prelims:hi}, using an implemented LL/SC object loses the state QHI property for the hash table, making it only QHI\@.

\section{Conclusion}
In this paper we presented a new randomized implementation of multiple LL/SC objects from FAID and CAS that advances the state of the art in two ways:
First, it is the most space-efficient solution that achieves expected constant step complexity without relying on unbounded sequence numbers.
Second, it is the first wait-free algorithm that preserves quiescent history independence in concurrent data structures.

It is perhaps surprising that QHI-preserving LL/SC can be achieved not only through inefficient constructions, but also via an algorithm that achieves the best known asymptotic bounds in both step and space complexity.

We believe that the improvement on space complexity is an important complexity theoretical contribution, and that the algorithm is useful in real applications.
In particular, our algorithm renders the history-independent hash table of~\cite{ABFOS2025a} practically implementable on hardware (as it relies on LL/SC, which is not available in hardware) without an impact on its asymptotic efficiency.
While there are not yet many QHI algorithms that use LL/SC, we believe it is applicable to many future algorithms, because non-QHI algorithms typically do not need method calls of some operations to rely on \LL{}'s from other operations.

Our work complements, rather than replaces, the algorithms of Blelloch and Wei~\cite{BW2020a} and Jayanti and Petrovic~\cite{JP2003a}.
In contrast to their approaches, our construction requires augmenting LL/SC values with small bounded tags stored in CAS objects and, unlike the algorithm of Blelloch and Wei, does not support multi-word LL/SC\@.
Achieving these properties simultaneously with our space bounds remains an important open problem.

Other open problems include finding lower bounds on space complexity for deterministic or even randomized algorithms, as well as a general construction for inserting \CL{} operations into implementations using LL/SC while preserving their asymptotic space complexity.
Another direction would be to determine whether our space complexity and/or QHI-preserving property is achievable with a strongly linearizable implementation, with a randomized algorithm under the strong adaptive adversary, with a deterministic algorithm, or under different attack models.

\bibliographystyle{plainurl}
\bibliography{pwliterature}

\confonly{
  \appendix
  
}

\end{document}